\newtheorem{theorem}{Theorem}[section]
\newtheorem{lemma}[theorem]{Lemma}
\newtheorem{proposition}[theorem]{Proposition}
\newtheorem{corollary}[theorem]{Corollary}
\theoremstyle{definition}
\newtheorem{definition}[theorem]{Definition}
\newtheorem{question}[theorem]{Question}
\theoremstyle{remark}
\newtheorem{remark}[theorem]{Remark}
\def\Fq{{\mathbb F}_q}
\def\g{{\gamma}}
\def\M{{\mathsf M}}
\def\FP{{\mathsf {FP}}}
\def\Sd{{S_{\le d}}}
\def\A{{\mathcal A}}
\def\hp3{\widehat{\mathbb P}^3}
\newcommand{\supp}{\mathrm{Supp}}
\newcommand{\Jmon}{\mathcal{J}_{\mathsf{Mon}}}
\newcommand{\ev}{\mathrm{ev}}
\newcommand{\Sp}{\mathrm{Span}}
\def\HF{\mathsf{HF}}
\def\Z{{\mathsf Z}}
\def\ZZ{{\mathbb Z}}
\def\Fu2{{F_{u_1}^{u_2}}}
\def\LT{\mathsf{LT}}
\begin{document}
\title[Relative Generalized Hamming weights of affine cartesian codes]{Relative Generalized Hamming weights of affine Cartesian codes}
\author{Mrinmoy Datta}
\thanks{The author is supported by a postdoctoral fellowship from DST-RCN grant INT/NOR/RCN/ICT/P-03/2018.}
\address{Institute of Mathematics and Statistics, \newline \indent
University of Troms\o \\
Troms\o, Norway}
\email{mrinmoy.dat@gmail.com}

\begin{abstract}
We explicitly determine all the relative generalized Hamming weights of affine Cartesian codes using the notion of footprints and results from extremal combinatorics. This generalizes the previous works on the determination of relative generalized Hamming weights of Reed-Muller codes by Geil and Martin, as well as the determination of all the generalized Hamming weights of the affine Cartesian codes by Beelen and Datta. 
\end{abstract}

\maketitle

\section{Introduction}
Determination of parameters of Reed-Muller type codes have received a lot of attention from several mathematicians in recent past. In this paper, we look at a certain class of codes, called the affine Cartesian codes, that comes naturally as a generalization of Reed-Muller Codes. These codes were introduced in 2013 by Geil and Thomsen \cite{GT1} in a more general setting of weighted Reed-Muller codes. The name ``affine Cartesian codes" was coined by L\'opez, Renter\'ia-M\'arquez and Villarreal \cite{LRV} in 2014. Since then several articles have appeared where the parameters of these codes were studied extensively. Like in the case of Reed-Muller codes, the problem of computing parameters such as minimum distance, generalized Hamming weights etc., of affine Cartesian codes translates to the problem of determination of the maximum number of common zeroes of systems of polynomials satisfying certain properties in a subset of an affine space over a finite field. The fundamental properties of affine Cartesian codes, such as their dimensions and the minimum distances, were obtained in \cite{LRV}. Later in 2018, the generalized Hamming weights \cite{BD} of the affine Cartesian codes were completely determined. This generalizes the classical work \cite{HP} of Heijnen and Pelikaan towards the determination of all the generalized Hamming weights of the Reed-Muller codes. Several articles, for example \cite{C, CN}, are devoted towards the determination of the next to minimal weights of affine Cartesian codes. 

The notion of generalized Hamming weights of a code was introduced by Wei \cite{W} in 1991  in order to characterize the code performance of on a wire tap channel of type II. A generalization of this notion, known as generalized Hamming weights or higher weights, was defined and studied by Wei \cite{W} in 1991. A generalization of these weights is known as the relative generalized Hamming weight of a code $C_1$ with respect to a proper subcode $C_2$. This notion was introduced by Luo, Mitrpant, Han Vinck and Chen \cite{Lea}, again towards studying new characters on the wire tap channel of type II, in 2005 and was further studied in a subsequent article \cite{LCL} by Liu, Chen and Luo. For the definition of the relative generalized Hamming weights of linear codes we refer to Subsection \ref{sec:rel}.

As the title of the article indicates, we are interested in determining the relative generalized Hamming weights of an affine Cartesian codes with respect to a subcode which is again an affine Cartesian codes.  This work generalizes the result in the article \cite{GM} where the authors have determined all the relative generalized Hamming weights of the Reed-Muller codes. Also, the main results of the current article can be viewed as a generalization of the result in \cite{BD} which gives all the generalized  Hamming weights of affine Cartesian codes. In proving our result in this paper, we follow the footsteps of \cite{GM} and \cite{BD}, where the results were derived using the notion of so-called footprint bound. Some early articles on  footprint bounds include \cite{FL, H, GH} and some recent articles include \cite{NPV, GMVV, BDG} among others. A somewhat brief discussion of the notion of the footprint bounds is given in Subsection \ref{sec:fp}.

The paper is organized as follows. In Section 2, we recall most of the definitions and the known results that will be used in proving our main theorem. In Section 3, we deduce Theorem \ref{w}, which can be viewed as an extension of the famous Kruskal-Katona Theorem in extremal combinatorics. Finally, in Section 4, we state and prove the main result of the paper where we compute all the relative generalized Hamming weights of an affine Cartesian code with respect to a smaller affine Cartesian code.

\section{Preliminaries}
We devote this section to recalling the well-known definitions and results that will be used in the sequel. In particular, we recall the definitions of relative generalized Hamming weights of a code with respect to a smaller subcode and the notion of affine Cartesian codes in the following two subsections. Later, we revisit the notion of the so called footprint bound which helps us in translating the algebraic geometric problem of determination of the maximum number of common zeroes of  certain systems of polynomials in a specified subset of the affine space over a projective space into a seemingly different problem in extremal combinatorics. We will conclude this section by introducing some combinatorial notations which will be used in the next section. In particular, none of the results or definitions mentioned in this section are new. For a thorough understanding of the results that are mentioned here a reader is encouraged to see the references mentioned and the references therein. 

\subsection{Relative generalized Hamming weights of linear codes} \label{sec:rel}
We begin this subsection by recalling the definition of the relative generalized Hamming weights of a code with respect to a proper subcode. Throughout, we will denote by $\Fq$ a finite field with $q$ elements where $q$ is a prime power. 

\begin{definition}\cite[Definition 2]{LCL}
Let $C_2 \subsetneq C_1$ be linear codes and $\ell := \dim C_1 - \dim C_2$. For $r = 1, \dots, \ell$, the $r$-th \textit{relative generalized Hamming weights}  of $C_1$ with respect to $C_2$ (RGHW of $C_1$ w.r.t. $C_2$) is defined as
$$M_r (C_1, C_2) := \min_{J \subseteq \{1, \dots, n\}} \{ |J| : \dim ((C_1)_J) - \dim ((C_2)_J) = r \},$$
where $(C_i)_J = \{c = (c_1, \dots, c_n) \in C_i \mid c_t =0 \ \text{for} \ t \notin J\}$ for $i = 1, 2$. The sequence $(M_1 (C_1, C_2), \dots, M_{\ell} (C_1, C_2))$ is known as the hierarchy of RGHWs of $C_1$ w.r.t. $C_2$.
\end{definition}

The following Lemma, which can be found as \cite[Lemma 1]{LCL}, gives an alternative definition of the RGHWs of a code $C_1$ w.r.t. a proper subcode $C_2$. 

\begin{lemma}\label{lem:defn}\cite[Lemma 1]{LCL}
Let $C_2 \subsetneq C_1$ be linear codes and $\ell = \dim C_1 - \dim C_2$. For $r = 1, \dots, \ell$, we have
\begin{equation}\label{defn}
M_r (C_1, C_2) = \min \left\{|\supp (D)| : D \subset C_1;  D \cap C_2 = \{0\}, \dim D = r\right\},
\end{equation}
where, given a subspace $D$ of $\Fq^n$, the support of $D$, denoted by $\supp (D)$, is given by 
$$\supp (D) := \left\{i \in \{1, \dots, n\} \mid c_i \neq 0 \ \text{for some} \ (c_1, \dots, c_n) \in D \right\}.$$
\end{lemma}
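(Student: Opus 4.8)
The plan is to prove the equality in \eqref{defn} by establishing the two opposite inequalities between the quantity on its right-hand side, which I denote $A_r$, and the quantity $M_r(C_1,C_2)$ from the original definition. The central observation, used in both directions, is that for any $J \subseteq \{1,\dots,n\}$ the integer $g(J) := \dim((C_1)_J) - \dim((C_2)_J)$ is precisely the dimension of a vector-space complement of $(C_2)_J$ inside $(C_1)_J$; moreover, since every codeword of $(C_1)_J$ is supported on $J$, any subspace $D \subseteq (C_1)_J$ satisfies $\supp(D) \subseteq J$ as well as $D \cap C_2 = D \cap (C_2)_J$.

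For the inequality $A_r \le M_r(C_1,C_2)$, I would start from a set $J$ attaining the minimum in the original definition, so that $|J| = M_r(C_1,C_2)$ and $g(J) = r$. Choosing $D$ to be any complement of $(C_2)_J$ in $(C_1)_J$ gives $\dim D = r$, while the observation above yields $\supp(D) \subseteq J$ and $D \cap C_2 = D \cap (C_2)_J = \{0\}$. Thus $D$ is admissible on the right-hand side of \eqref{defn}, and $|\supp(D)| \le |J| = M_r(C_1,C_2)$, which settles this direction.

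The reverse inequality $M_r(C_1,C_2) \le A_r$ requires more care, and this is where I expect the main obstacle. Starting from an admissible $D$ with $|\supp(D)| = A_r$ and setting $J = \supp(D)$, one has $D \subseteq (C_1)_J$ and $D \cap (C_2)_J = \{0\}$, so the internal direct sum $D \oplus (C_2)_J \subseteq (C_1)_J$ forces $g(J) \ge \dim D = r$. The original definition, however, asks for a set on which $g$ equals $r$ rather than merely exceeds it, so I cannot simply take $J = \supp(D)$. To bridge this gap I would prove two monotonicity facts about $g$ under deletion of a single coordinate $i$ from a set $J$. First, $g(J \setminus \{i\}) \le g(J)$: writing $\delta_1, \delta_2 \in \{0,1\}$ for the drops in $\dim((C_1)_J)$ and $\dim((C_2)_J)$ upon imposing $c_i = 0$, one has $g(J\setminus\{i\}) = g(J) - \delta_1 + \delta_2$, and if $\delta_1 = 0$ then the functional $c \mapsto c_i$ vanishes on all of $(C_1)_J$, hence on $(C_2)_J \subseteq (C_1)_J$, forcing $\delta_2 = 0$; therefore $-\delta_1 + \delta_2 \le 0$. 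Second, $g$ drops by at most one at each deletion, since $\delta_1 - \delta_2 \le 1$.

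With these facts in hand I would delete the elements of $J$ one at a time, producing a chain $J = J_0 \supseteq J_1 \supseteq \cdots \supseteq J_{|J|} = \emptyset$ along which $g$ never increases, decreases by at most one at each step, starts at $g(J) \ge r$, and ends at $g(\emptyset) = 0 < r$ (as $r \ge 1$). A discrete intermediate-value argument then yields an index $k$ with $g(J_k) = r$; setting $J' := J_k$ gives $g(J') = r$ and $|J'| \le |J| = A_r$, whence $M_r(C_1,C_2) \le |J'| \le A_r$. Combining the two inequalities proves the lemma.
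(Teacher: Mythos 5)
Your proof is correct, but there is nothing in the paper to compare it against: the paper does not prove this lemma at all, it simply quotes it as \cite[Lemma 1]{LCL} and uses it as a black box. Judged on its own merits, your argument is complete and correctly isolates the one genuinely delicate point. The forward direction (choosing a complement $D$ of $(C_2)_J$ inside $(C_1)_J$, so that $\supp(D)\subseteq J$ and $D\cap C_2 = D\cap (C_2)_J=\{0\}$) is routine, and your converse direction handles the real obstacle: taking $J=\supp(D)$ only yields $g(J):=\dim((C_1)_J)-\dim((C_2)_J)\ge r$, while the original definition of $M_r$ requires exact equality. Your two monotonicity facts are both sound --- each $(C_i)_{J\setminus\{i\}}$ is the kernel of the coordinate functional $c\mapsto c_i$ on $(C_i)_J$, so each dimension drop $\delta_i$ lies in $\{0,1\}$, and $\delta_1=0$ forces $\delta_2=0$ because a functional vanishing on $(C_1)_J$ vanishes on the subspace $(C_2)_J$ --- and the discrete intermediate-value argument from $g(J)\ge r$ down to $g(\emptyset)=0<r$ correctly produces a set $J'\subseteq J$ with $g(J')=r$ and $|J'|\le |\supp(D)|$. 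As a side benefit, the same argument applied to the full coordinate set $\{1,\dots,n\}$ (where $g=\ell\ge r$) shows the minimum in the original definition is taken over a nonempty family, i.e.\ that $M_r(C_1,C_2)$ is well defined; this point is usually left implicit, as you leave it.
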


In  what follows, we will use the equation \eqref{defn} as our definition of the RGHWs. 
\begin{remark}
In view Lemma \ref{lem:defn}, it is clear that  if $C_2 = \{0\}$, then the RGHWs of $C_1$ w.r.t. $C_2$ are exactly the generalized Hamming weights of $C_1$. 
\end{remark}

\subsection{Affine Cartesian codes}\label{sec:affcar}
In this subsection, we recall the definition of the affine Cartesian codes. Throughout, we will use the convention that the degree of the zero polynomial is $-1$.

\begin{definition}
Let $d_1 \le \dots \le d_m$ be positive integers and $A_1, \dots, A_m$ are subsets of $\Fq$ with cardinalities $d_1, \dots, d_m$ respectively. Denote by $\mathcal{A}$ the cartesian product $\mathcal{A}:=A_1 \times \cdots \times A_m$. Note that $|\mathcal{A}| = n:= d_1 \cdots d_m$. Further, fix an enumeration $P_1, \dots, P_n$ of elements in $\mathcal{A}$ and a positive integer $d \le k:=\sum_{i=1}^m (d_i - 1).$ For $d \le k$, define the subspace
$$S_{\le d}{(\mathcal{A})} := \{f \in \Fq[x_1, \dots, x_m] : \deg_{x_i} f \le d_i -1 \ \mathrm{and} \ \deg f \le d\}.$$
The map 
$$\ev: S_{\le k} (\mathcal{A}) \to \Fq^{|\mathcal A|} \ \ \ \mathrm{by} \ \ \ \ f \mapsto (f(P_1), \dots, f(P_n))$$
is a linear map and consequently, for each $d \le k$, the image $AC_q (d, \mathcal{A}) := \ev (\Sd (\mathcal{A}))$ is a linear subspace of $\Fq^n$ and is called \textit{affine cartesian codes}.
\end{definition}

Henceforth, we will write $A_i := \{\g_{i, 1}, \dots, \g_{i, d_i}\}$ for $i = 1, \dots, m$. It is not hard to show  that the map $\ev$ is one-one. This implies that the dimension of $AC_q (d, \A)$ is same as $\dim S_{\le d}(\mathcal{A})$.  As mentioned in the introduction, we are interested in the determination of the RGHWs of an affine Cartesian code w.r.t. a ``smaller" affine Cartesian code. More precisely, our goal is to answer the following:

\begin{question}\label{q1}
Let $u_1, u_2$ be integers satisfying $-1 \le u_2 < u_1 \le k$. Determine  $M_r (AC_q (u_1, \mathcal{A}), AC_q (u_2, \mathcal{A}))$, for $r \le  \dim AC_q (u_1, \mathcal{A}) - \dim AC_q (u_2, \mathcal{A})$.
\end{question}

For ease of  notations, we will denote $M_r(u_1, u_2):=M_r (AC_q (u_1, \mathcal{A}), AC_q (u_2, \mathcal{A}))$ and $\ell:=\dim AC_q (u_1, \mathcal{A}) - \dim AC_q (u_2, \mathcal{A})$. We note that if $u_2 = -1$, then $M_r (u_1, u_2)$ are simply the $r$-th generalized Hamming weights of $AC_q (u_1, \mathcal{A})$. In the recent work \cite{BD}, the generalized Hamming weights of affine Cartesian codes were completely determined. To answer the above question we introduce the following sets. For an integer $r \le \ell$, we define,
$$\mathcal{D}_r := \{D \subset AC_q (u_1, \mathcal{A}) \mid D \cap AC_q (u_2, \mathcal{A}) = 0; \dim D = r\}.$$
We endow the set of monomials in $\Fq[x_1, \dots, x_m]$ with the graded lexicographic order. In the following Lemma we give a necessary and sufficient condition for a subspace of $AC_q (u_1, \mathcal{A})$ to be a member of $\mathcal{D}_r$.

\begin{lemma}\label{gen}
Let $D$ be a subspace of $AC_q (u_1, \mathcal{A})$ of dimension $r$. Then $D \in \mathcal{D}_r$ iff there exists $f_1, \dots, f_r \in S_{\le d}(\A)$ with $D = \Sp \{\ev (f_1), \dots, \ev(f_r)\}$ satisfying the following three conditions:
\begin{enumerate}
\item[(C1)] $f_1, \dots, f_r$ are linearly independent,
\item[(C2)] $u_2 < \deg \LT (f_i) \le u_1$ for $i = 1, \dots, r$,
\item[(C3)] $\LT (f_i) \neq \LT (f_j)$ whenever $i \neq j$.
\end{enumerate}
Consequently, $|\supp (D)| = n - |\Z_{\mathcal{A}} (f_1, \dots, f_r)|,$
where $\Z_\A (f_1, \dots, f_r)$ denotes the set of common zeroes of $f_1, \dots, f_r \in \A$.
\end{lemma}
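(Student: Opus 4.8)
The plan is to transfer the problem from the code $AC_q(u_1,\A)=\ev(S_{\le u_1}(\A))$ to the polynomial side, using that $\ev$ is injective on $S_{\le u_1}(\A)$. Writing $D=\ev(W)$ for the unique subspace $W\subseteq S_{\le u_1}(\A)$ with $\dim W=\dim D=r$, injectivity gives $D\cap AC_q(u_2,\A)=\{0\}$ if and only if $W\cap S_{\le u_2}(\A)=\{0\}$, since $AC_q(u_2,\A)=\ev(S_{\le u_2}(\A))$. Thus membership $D\in\mathcal{D}_r$ is equivalent to $W$ being an $r$-dimensional subspace of $S_{\le u_1}(\A)$ meeting $S_{\le u_2}(\A)$ trivially, and the task reduces to characterizing such $W$ by the existence of a basis satisfying (C1)--(C3).

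Two elementary observations about the graded lexicographic order will drive the argument. First, since this order refines total degree, the leading term of any nonzero $f$ satisfies $\deg\LT(f)=\deg f$. Second, for $f\in S_{\le u_1}(\A)$ the per-variable bounds $\deg_{x_i}f\le d_i-1$ already hold, so $\deg f\le u_2$ forces $f\in S_{\le u_2}(\A)$; hence $W\cap S_{\le u_2}(\A)=\{0\}$ is the same as saying every nonzero element of $W$ has degree exceeding $u_2$. For the forward direction, I would start from any basis of $W$ and run Gaussian elimination with respect to the monomial order to obtain $f_1,\dots,f_r$ whose leading terms are pairwise distinct; this yields (C1) and (C3) at once. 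Membership in $S_{\le u_1}(\A)$ gives $\deg\LT(f_i)=\deg f_i\le u_1$, while the second observation gives $\deg f_i>u_2$, so (C2) holds.

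For the converse, given $f_1,\dots,f_r$ satisfying (C1)--(C3), I would set $W=\Sp\{f_1,\dots,f_r\}$; then (C1) and injectivity of $\ev$ give $\dim D=r$. The crux is showing $W\cap S_{\le u_2}(\A)=\{0\}$. Take a nonzero $f=\sum_i c_i f_i$ and let $i_0$ maximize $\LT(f_i)$ among the indices with $c_i\ne 0$. Because the leading terms are distinct by (C3), the monomial $\LT(f_{i_0})$ is strictly larger than every $\LT(f_i)$ with $c_i\ne 0$ and $i\ne i_0$, hence cannot occur in any such $f_i$; its coefficient in $f$ is therefore the nonzero scalar coming from $c_{i_0}$ and the leading coefficient of $f_{i_0}$, so the leading monomial of $f$ is $\LT(f_{i_0})$. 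Thus $\deg f=\deg\LT(f_{i_0})>u_2$ by (C2), and $f\notin S_{\le u_2}(\A)$, as required.

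This no-cancellation step is the one place where the monomial-order machinery is genuinely used, and I expect it to be the main point to state cleanly: the subtlety is to rule out that a lower-order term of some $f_i$ with $c_i\ne 0$ interferes with the leading monomial of $f$, which is exactly what the distinctness (C3) prevents. Finally, for the displayed consequence, the $i$-th coordinate of a general element $\ev\!\left(\sum_j a_j f_j\right)$ of $D$ equals $\sum_j a_j f_j(P_i)$, which is nonzero for some choice of the $a_j$ precisely when $P_i$ is not a common zero of $f_1,\dots,f_r$. Hence $\supp(D)=\{i : P_i\notin\Z_\A(f_1,\dots,f_r)\}$, giving $|\supp(D)|=n-|\Z_\A(f_1,\dots,f_r)|$.
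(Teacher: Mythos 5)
Your proposal is correct and takes essentially the same approach as the paper: both pass to the polynomial side via injectivity of $\ev$ and use Gaussian elimination on leading terms to secure (C1)--(C3) in the necessity direction, with (C2) following because every nonzero element of the preimage of $D$ must have degree exceeding $u_2$. The paper dismisses sufficiency as ``easy'' and the support formula as ``trivial''; your no-cancellation leading-term argument and the coordinate-wise computation of $\supp(D)$ supply precisely the details it omits.
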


\begin{proof}
It is easy to see that the three conditions are sufficient. To see that they are also necessary, we begin with  $D \in \mathcal{D}_r$, and a set of $r$ linearly independent polynomials $f_1, \dots, f_r$ such that $D = \Sp \{\ev (f_1), \dots, \ev(f_r)\}$. It is clear that the polynomial $f_1$ satisfies the condition (C2). For  $2 \le k \le r$, we replace $f_k$ by a linear combination of $f_1, \dots, f_k$ so that  the polynomials $f_1, \dots, f_k$ satisfy the condition (C3). Clearly the condition (C2) is satisfied for $f_1, \dots, f_r$. The last assertion follows trivially. 
\end{proof}
We now define the following family consisting of sets of $r$ polynomials:
$$\mathcal{C}_r := \{\{f_1, \dots, f_r \} \mid f_1, \dots, f_r \ \text{satisfy (C1), (C2), (C3)}\}$$
It follows directly from Lemma \ref{gen} that 
\begin{equation}\label{M}
M_r (u_1, u_2) 
= n - \max \{|\Z_{\mathcal{A}} (f_1, \dots, f_r)| : \{f_1, \dots, f_r \} \in \mathcal{C}_r\}.
\end{equation}
We have thus shown that the Question \ref{q1} is equivalent to the following question:
\begin{question}\label{q2}
For integers $r, u_1, u_2$ and the set $\A$ as above, determine 
$$a_r (u_1, u_2, \mathcal{A}) := \max \{|\Z_{\mathcal{A}} (f_1, \dots, f_r)| : \{f_1, \dots, f_r \} \in \mathcal{C}_r\}.$$
\end{question}

\subsection{The footprint bound}\label{sec:fp}
In order to answer Question \ref{q2} we will use the footprint bound. This method of producing upper bounds on generalized Hamming weights of Reed-Muller type codes is dependent on the theory of Gr\"obner bases and that of affine Hilbert functions. For a comprehensive reading on these notions, the reader is referred to \cite{CLO}. Most of what follows in this section can be found in \cite[Section 2]{BD}. We provide a somewhat detailed description of what will be used later for the sake of completeness and ease of readability.

Let us denote by $S$ the polynomial ring $\Fq[x_1, \dots, x_m]$ and for any integer $u$ we define $S_{\le u} :=\{f \in S  \mid \deg f \le u \}.$ For any ideal $I$ of $S$, we define $I_{\le u} := I \cap S_{\le u}$. The affine Hilbert function of $I$, denoted by $^a \HF_I$, is defined as
$$^a\HF_I : \ZZ \to \ZZ \ \ \ \ \text{given by} \ \ \ \ ^a\HF_I (u) := \dim S_{\le u} - \dim I_{\le u}.$$
It is easy to derive that if $I$ and $J$ are ideals of $S$ with $I \subset J$, then for any $u \in \ZZ$ we have $^a \HF_J(u) \le ^a \HF_I (u)$. For a subset $X \subset \Fq^m$ we define the ideal $I(X)$ to be the ideal of $S$ consisting of polynomials vanishing everywhere in $X$. For such a subset $X \subset \Fq^m$, we define its affine Hilbert function, denoted by $^a \HF_X$, as  $^a \HF_X := ^a\HF_I(X)$. 

\begin{proposition}\label{found}
\
\begin{enumerate} 
\item[(a)] \cite[Section 9.3]{CLO} Let $\prec$ be any graded order on $S$. Then 
\begin{enumerate}
\item[(i)] For any ideal $I$ of $S$, we have $^a\HF_{\LT (I)} (u) = ^a\HF_I (u)$.
\item[(ii)] If $I$ is a monomial ideal of $S$, then $^a\HF_I(u)$ is given by the number of monomials of degree at most $u$ that do not lie in $I$
\end{enumerate}
\item[(b)] \cite[Lemma 2.1]{NW} If $Y \subset \Fq^m$ is a finite set, then $|Y| = ^a \HF_Y (u)$ for all sufficiently large values of $u$.
\end{enumerate}
\end{proposition}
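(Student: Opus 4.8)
My plan is to give self-contained arguments for the three assertions, treating them in the order (a)(ii), (a)(i), (b): the monomial case (a)(ii) will be used as a counting device inside (a)(i), whereas (b) is essentially independent.

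For (a)(ii) I would argue directly. Since $I$ is a monomial ideal, every polynomial in $I_{\le u}$ has each of its monomials lying in $I$ and of degree at most $u$; hence $I_{\le u}$ is spanned by the monomials of $I$ of degree $\le u$. As $S_{\le u}$ has as a basis all monomials of degree $\le u$, the monomials of degree $\le u$ \emph{not} in $I$ project to a basis of $S_{\le u}/I_{\le u}$, and ${}^a\HF_I(u)=\dim S_{\le u}-\dim I_{\le u}$ counts exactly these.

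The crux is (a)(i). Applying (a)(ii) to the monomial ideal $\LT(I)$ shows that ${}^a\HF_{\LT(I)}(u)$ equals the number of monomials of degree $\le u$ outside $\LT(I)$, i.e.\ the number of standard monomials of degree $\le u$; so it suffices to prove that these standard monomials project to a basis of $S_{\le u}/I_{\le u}$. Fixing a Gr\"obner basis $G$ of $I$ for $\prec$, I would establish spanning by reducing an arbitrary $f\in S_{\le u}$ modulo $G$: because $\prec$ is a graded order, each reduction step replaces a term by terms of no larger degree, so the normal form $r$ lies in $S_{\le u}$, is supported on standard monomials, and satisfies $f-r\in I\cap S_{\le u}=I_{\le u}$. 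For independence, a nonzero $\Fq$-combination of standard monomials of degree $\le u$ lying in $I_{\le u}\subseteq I$ would have a leading monomial that is simultaneously standard and contained in $\LT(I)$, which is impossible. Comparing the resulting value of $\dim S_{\le u}/I_{\le u}$ with the count above then yields ${}^a\HF_I(u)={}^a\HF_{\LT(I)}(u)$. I expect the only delicate point---and the main obstacle---to be the interaction between the degree filtration and leading-term reduction; this is precisely where the hypothesis that $\prec$ is graded is indispensable, since it guarantees that reduction neither raises degrees nor leaves the subspace $S_{\le u}$.

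For (b) I would pass to the evaluation map $\ev_u:S_{\le u}\to\Fq^{|Y|}$ sending $f\mapsto (f(P))_{P\in Y}$, whose kernel is exactly $I(Y)\cap S_{\le u}=I(Y)_{\le u}$; rank--nullity then gives that ${}^a\HF_Y(u)=\dim S_{\le u}-\dim I(Y)_{\le u}$ is the rank of $\ev_u$, which is at most $|Y|$. To force equality for large $u$, I would construct for each $P\in Y$ an indicator polynomial vanishing on $Y\setminus\{P\}$ and equal to $1$ at $P$; since $Y$ is finite such polynomials exist and have bounded degree, so choosing $u$ at least the maximum of their finitely many degrees makes $\ev_u$ surjective, whence ${}^a\HF_Y(u)=|Y|$ for all sufficiently large $u$.
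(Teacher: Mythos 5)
Your proposal is correct in all three parts. Note, however, that the paper itself gives no proof of this proposition: it is quoted as a known result, with (a) cited to \cite[Section 9.3]{CLO} and (b) cited to \cite[Lemma 2.1]{NW}, so there is no internal argument to compare against. What you have written is essentially a faithful reconstruction of the standard proofs from those references. Part (a)(ii) is the elementary observation that a monomial ideal meets $S_{\le u}$ in the span of its monomials of degree $\le u$; part (a)(i) is Macaulay's basis theorem in its affine form, and you correctly isolate the one delicate point, namely that the division algorithm with respect to a Gr\"obner basis does not increase degree precisely because the order is graded (for a non-graded order the normal form of an element of $S_{\le u}$ can leave $S_{\le u}$, and the statement genuinely fails); your linear-independence argument via leading terms is the standard one. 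Part (b) is the usual interpolation argument: the kernel of the evaluation map on $S_{\le u}$ is $I(Y)_{\le u}$ by definition, and the indicator polynomials $e_P(x)=\prod_{Q\in Y\setminus\{P\}}\frac{x_{i_Q}-Q_{i_Q}}{P_{i_Q}-Q_{i_Q}}$ (choosing for each $Q$ a coordinate $i_Q$ where $P$ and $Q$ differ) have degree at most $|Y|-1$, so surjectivity, and hence ${}^a\HF_Y(u)=|Y|$, holds for all $u\ge |Y|-1$. The only gain the paper gets from citing rather than proving is brevity; your write-up makes the preliminaries self-contained at the cost of about a page, and it would compile into the paper without logical gaps.
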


Similar statements as in the above proposition could also be found, albeit in disguise of footprints, in \cite[Corollary 4.5]{G} and in \cite[Corollary 2.5]{CLO2}. The above Proposition helps us in finding out an upper bound for the quantity $|\Z_{\mathcal{A}} (f_1, \dots, f_r)|$ for a given $\{f_1, \dots, f_r \} \in \mathcal{C}_r$. To this end, we see that the polynomials $g_1, \dots, g_m \in I(Z_{\A} (f_1, \dots, f_r))$, where 
$$g_j := \prod_{k=1}^{d_j} (x_j - \g_{j, k}) \ \ \ \ \text{for} \ \ \ \ j = 1, \dots, m.$$ 
At this juncture, it will be useful to assign some notations for the ideals in question. Define
$$\mathcal{I} := I(Z_{\A} (f_1, \dots, f_r)) \ \ \ \text{and} \ \ \ \LT (\mathcal{I}):= \ \text{the leading term ideal of} \ I.$$
Furthermore, we have the monomial ideals:
$$\mathcal{J} := \langle f_1, \dots, f_r, g_1, \dots, g_m \rangle \ \ \ \text{and}  \ \ \ \ \Jmon:= \langle \LT (f_1), \dots, \LT (f_r), x_1^{d_1}, \dots, x_m^{d_m}\rangle.$$
It follows trivially from the above discussions that, $\mathcal{J} \subset \mathcal{I}$ and that 
\begin{equation}\label{eqone}
\Jmon \subseteq \LT (\mathcal{J}) \subseteq \LT (\mathcal{I})
\end{equation}
Using Proposition \ref{found} and equation \eqref{eqone} we see that for sufficiently large $u$,
\begin{equation}\label{eqtwo}
|\Z_{\A} (f_1, \dots, f_r)| = ^a\HF_{\mathcal{I}} (u) = ^a\HF_{\LT (\mathcal{I})} (u) \le ^a\HF_{\Jmon} (u)  
\end{equation}

Let us write $\M = \{\mu \in S \ | \  \mu \ \text{is a monomial} \}$. It follows from from Proposition \ref{found} (a) (ii) that 
$$^a\HF_{\Jmon} (u) = |\{\mu \in \M : \deg \mu \le u, x_i^{d_i} \nmid \mu, \LT (f_j) \nmid \mu \ \text{for} \ i = 1, \dots, m \ \text{and} \ j=1, \dots, r\}|.$$
Furthermore, if we take $u \ge \sum_{i=1}^m d_i$, then 
$$^a\HF_{\Jmon} (u) = |\{\mu \in \M : \deg_{x_i} \mu \le d_i - 1, \LT (f_j) \nmid \mu \ \text{for} \ i = 1, \dots, m \ \text{and} \ j=1, \dots, r\}|.$$
We define $$\M_{\A} := \{\mu \in \M \mid \deg_{x_i} \mu \le d_i -1 \ \text{for} \ i=1, \dots, m \},$$ 
and given any set of monomials $m_1, \dots, m_r$, the set of footprints,
$$\FP_\A (m_1, \dots, m_r) := \{\mu \in \M_\A : m_i \nmid \mu \ \text{for} \ i=1, \dots, r\}.$$
The previous discussions now imply that 
\begin{equation}\label{eqfp1}
|\Z_{\A} (f_1, \dots, f_r)| \le |\FP_\A (\LT (f_1), \dots, \LT (f_r))|.
\end{equation}

The upper bound on the number of points on $Z_{\A} (f_1, \dots, f_r)$ thus obtained from equation \eqref{eqfp1} is referred to as the footprint bound. Indeed,
\begin{equation}\label{eqmax}
a_r (u_1, u_2, \A) \le \max \left\{ |\FP_\A (\LT (f_1), \dots, \LT (f_r))| : \{f_1, \dots, f_r\} \in \mathcal{C}_r\right\}.
\end{equation}
In the following subsection, we will introduce some combinatorial notions which will help us in deriving the right hand side of the equation \eqref{eqmax}.

\subsection{Some combinatorial tools}
In this subsection, we will introduce some combinatorial notions that will help us in translating the problem of determining the right hand side of the equation \eqref{eqmax} to a problem of extremal combinatorics. Let
$$F = \{0,\dots, d_1 - 1\} \times \dots \times \{0, \dots, d_m - 1\}. $$
We have two natural orderings for the elements of $F$, namely the lexicographic order and the partial order. Let us write
$$(a_1, \dots, a_m) \prec_{lex} (b_1, \dots, b_m)$$
if $(a_1, \dots, a_m)$ is less than $(b_1, \dots, b_m)$ in lexicographic order, i.e. there exists $j$ with $1 \le j \le m$ such that $a_i =  b_i$ for all $i < j$ and $a_j < b_j$. Also we will write $$(a_1, \dots, a_m) \prec_{P} (b_1, \dots, b_m)$$ if and only if $(a_1, \dots, a_m)$ is less than $(b_1, \dots, b_m)$ in partial order, i.e. $a_i \le b_i$ for all $i = 1, \dots, m$ and for some $j \in \{1, \dots, n\}$ we have $a_j < b_j$.We write $(a_1, \dots, a_m) \preceq_{lex} (b_1, \dots, b_m)$ (resp. $(a_1, \dots, a_m) \preceq_{P} (b_1, \dots, b_m)$) if $(a_1, \dots, a_m) \prec_{lex} (b_1, \dots, b_m)$ (resp. $(a_1, \dots, a_m) \prec_{P} (b_1, \dots, b_m)$) or $(a_1, \dots, a_m) = (b_1, \dots, b_m)$. We have a bijection 
$$\phi : \M_\A \to F \ \ \text{given by} \ \ x_1^{a_1} \cdots x_m^{a_m} \mapsto (a_1, \dots, a_m).$$
It is clear that for $\mu_1, \mu_2 \in \M_\A$, we have $\mu_1 \mid \mu_2$ if and only if $\phi(\mu_1) \preceq_P \phi(\mu_2)$.
Now for $\textbf{a} = (a_1, \dots, a_m) \in F$, we define $\deg(\textbf{a}):=a_1+\cdots+a_m$. Let us introduce some subsets of $F$ consisting of elements satisfying certain degree constraints: for any integer $u$, define
$$F_u := \{\textbf{a} \in F : \deg(\textbf{a}) = u\} \ \ \ \text{and} \ \ \  F_{\le u} := \{\textbf{a} \in F : \deg(\textbf{a}) \le u\}. $$
On a similar note, for integers $u_1, u_2$ satisfying $u_2 < u_1$, we define 
$$F_{u_2}^{u_1} := \{\textbf{a} \in F : u_2 < \deg(\textbf{a}) \le u_1\}.$$
Given a subset $S \subset F$, we define the shadow (resp. footprint) of $S$ in $F$, denoted by $\nabla (S)$ (resp. $\Delta(S)$) as follows:
$$\nabla (S) := \{ \textbf{b} \in F \mid \textbf{a} \preceq_P \textbf{b} \ \text{for some} \ \textbf{a}  \in S\}  \ \ \text{and} \ \ \Delta(S) := F \setminus \nabla (S).$$
For an integer $u$, we define $\Delta_u (S) := \Delta (S) \cap F_u$ and $\nabla_u (S) := \nabla (S) \cap F_u$. It now follows from equation \eqref{eqmax} that 
\begin{equation}\label{eqfp}
a_r (u_1, u_2, \A) \le \max \{|\Delta (S)| : S \subset F_{u_2}^{u_1}, |S| = r\}.
\end{equation}
In the subsequent section, we will derive the exact value of the right hand side in the above inequality. Before concluding this section, we remark that the field $\Fq$ does not play an essential role as long as we are interested in computing the quantity $a_r (u_1, u_2, \A)$. The inequalities \eqref{eqmax} and \eqref{eqfp} continue to hold even if we replace $\Fq$ by an arbitrary field having at least $d_m$ elements.

\section{Result from Combinatorics}
Motivated from the discussion in the last section,  we now investigate the following question.
\begin{question}\label{q3}
Fix integers $u_1, u_2$ and $r$ with $-1 \le u_2 < u_1 \le k$. Denote by $\mathcal{F}_r$, the family of subsets of $F_{u_2}^{u_1}$ of cardinality $r$. Determine $\max \{|\Delta(S)| : S \in \mathcal{F}_r\}.$
\end{question}

We remark that if $d_1 = d_2 = \dots = d_m = q$, then the answer to this question is known in various cases: 
\begin{enumerate}
\item for $u_2 = -1$, this question corresponds to the determination of the GHWs of the Reed-Muller codes, which was solved by Heijnen and Pellikaan in \cite{HP}.
\item in general, without any constraint on $u_2$, the question corresponds to the determination of the RGHWs of the Reed-Muller codes, and as mentioned before, this question was answered by Geil and Martin in \cite{GM}. 
\end{enumerate}
Furthermore, in the general situation with $d_1 \le \dots \le d_m$, this problem was solved in \cite{BD} in the case $u_2 = -1$ in order to determine the GHWs of the affine Cartesian codes. In order to proceed, we first introduce the following two notations:
\begin{enumerate}
\item[(a)] For and integer $u$ and a subset $S \subset F_u$, we define $L(S)$ to be the set consisting of the first $|S|$ elements of $F_u$ in descending lexicographic order.
\item[(b)] For integers $u_1, u_2$ with $-1 \le u_2 < u_1 \le k $ and a subset $S \subset F_{u_2}^{u_1}$, we define $N(S)$ to be the set consisting of the first $|S|$ elements of $F_{u_2}^{u_1}$ in descending lexicographic order.
\end{enumerate}

The following classical Theorem, due to Clements and Lindstr\"om, will play an instrumental role in the sequel. 

\begin{theorem}\cite[Corollary 1]{CL}\label{CLT}
Let $u < k$ and $S \subseteq F_u$. Then $$\nabla_{u+1} (L(S)) \subseteq L (\nabla_{u+1} (S)).$$
\end{theorem}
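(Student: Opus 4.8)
The plan is to deduce the containment from two separate facts about the descending-lexicographic order and then to invoke the elementary observation that any two lex-initial segments of $F_{u+1}$ are nested according to their sizes. Writing $L := L(S)$, the two facts are: (a) the upper shadow $\nabla_{u+1}(L)$ is again a lex-initial segment of $F_{u+1}$; and (b) replacing $S$ by its compression $L(S)$ does not enlarge the shadow, i.e.\ $|\nabla_{u+1}(L(S))| \le |\nabla_{u+1}(S)|$. Granting both, note that $L(\nabla_{u+1}(S))$ is by definition the lex-initial segment of $F_{u+1}$ of cardinality $|\nabla_{u+1}(S)|$, while $\nabla_{u+1}(L(S))$ is, by (a), a lex-initial segment of cardinality at most $|\nabla_{u+1}(S)|$ by (b); since the smaller of two lex-initial segments is contained in the larger, we obtain $\nabla_{u+1}(L(S)) \subseteq L(\nabla_{u+1}(S))$, as required.

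For fact (a), I would show directly that the descending-lex order is nested with respect to the upper shadow: if $\mathbf{b} \in \nabla_{u+1}(L)$ and $\mathbf{b}' \in F_{u+1}$ with $\mathbf{b} \prec_{lex} \mathbf{b}'$, then $\mathbf{b}' \in \nabla_{u+1}(L)$. Since $\mathbf{b}$ lies above some $\mathbf{a} \in L$ in $\preceq_{P}$, one produces from the triple $(\mathbf{a},\mathbf{b},\mathbf{b}')$ an element $\mathbf{a}' \in F_u$ with $\mathbf{a}' \preceq_{P} \mathbf{b}'$ and $\mathbf{a} \preceq_{lex} \mathbf{a}'$; because $L$ consists of the lex-largest $|S|$ elements of $F_u$ and $\mathbf{a}'$ is lex at least as large as $\mathbf{a}$, it follows that $\mathbf{a}' \in L$, whence $\mathbf{b}' \in \nabla_{u+1}(L)$. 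The construction of $\mathbf{a}'$ amounts to redistributing the single unit of degree separating level $u$ from level $u+1$ while respecting the coordinate bounds $0 \le a_i \le d_i - 1$, and reduces to a finite case analysis keyed to the position of the first lex-discrepancy between $\mathbf{b}$ and $\mathbf{b}'$.

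Fact (b) is the genuine combinatorial core and is precisely the content of the Clements--Lindstr\"om theorem. I would prove it by the method of compression (shifting) operators: for each ordered pair of coordinates, define a push operator on subsets of $F_u$ that transfers degree from one coordinate toward another subject to the bounds $d_i-1$, verify that a single application never increases $|\nabla_{u+1}(\cdot)|$, and show that iterating these operators drives an arbitrary $S$ to the lex-initial segment $L(S)$ with the shadow size only decreasing along the way. The delicate point, which is exactly where the bounded product-of-chains setting departs from the Boolean cube of the classical Kruskal--Katona theorem, is the monotonicity check for a single compression: one must partition $S$ and $\nabla_{u+1}(S)$ according to the value of the relevant coordinate and match up the pushed and unpushed parts so that no element of the shadow is lost.

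The main obstacle is therefore fact (b), the shadow-nonincreasing property of the compressions in the bounded setting; fact (a) and the concluding nesting step are comparatively routine. Since the statement is quoted verbatim from Clements and Lindstr\"om, a legitimate alternative is simply to cite \cite{CL} directly, using the reduction above only to confirm that the present conventions---descending lex together with the upper shadow $\nabla$, under the constraints $0 \le a_i \le d_i-1$ with $d_1 \le \dots \le d_m$---match the hypotheses under which they establish the nested-shadow property.
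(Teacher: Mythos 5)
The paper offers no proof of this statement at all: as the bracketed citation indicates, it is imported verbatim from Clements and Lindstr\"om \cite{CL}, so your fallback option---check that the present conventions match and invoke Corollary 1 of \cite{CL}---is exactly the paper's treatment. Your proposed reduction is also sound as far as it goes. The nesting step is correct, since any two descending-lex-initial segments of $F_{u+1}$ are comparable under inclusion, so facts (a) and (b) do yield the stated containment; this is the right bridge if one only has the cardinality form of the Clements--Lindstr\"om theorem in hand. Fact (a) does admit the elementary argument you sketch: given $\mathbf{a} \preceq_P \mathbf{b} \prec_{lex} \mathbf{b}'$ with $\mathbf{a} \in F_u$ and $\mathbf{b}, \mathbf{b}' \in F_{u+1}$, one may take $\mathbf{a}'$ to be $\mathbf{b}'$ with its last nonzero coordinate decreased by one; a short case analysis on the position of the first lex-discrepancy between $\mathbf{b}$ and $\mathbf{b}'$ shows $\mathbf{a} \preceq_{lex} \mathbf{a}' \preceq_P \mathbf{b}'$, and $\mathbf{a}'$ stays inside $F$ because coordinates are only decreased.

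The gap, if your intent is a self-contained proof, is exactly where you place it: fact (b) is not an auxiliary lemma on the way to the theorem, it \emph{is} the Clements--Lindstr\"om theorem, and your compression outline defers precisely its hard content---the verification that a single push operator never increases the size of the shadow in the bounded box $\{0,\dots,d_1-1\}\times\cdots\times\{0,\dots,d_m-1\}$. That monotonicity check is a substantial case analysis (it is most of \cite{CL}) and nothing in the sketch resolves it, so as a standalone argument the proposal proves the containment form only modulo the cardinality form, which remains unproved. Judged as a reduction plus citation, however, the proposal is correct and coincides with what the paper does; since the paper never intends to reprove \cite{CL}, citing it in the form stated is the appropriate course, and your decomposition then serves only as a (worthwhile) consistency check of the conventions---descending lex, upper shadow $\nabla$, and the ordering $d_1 \le \dots \le d_m$---against those of \cite{CL}.
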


The following is an easy corollary of the Theorem \ref{CLT}.  

\begin{corollary}\label{cl} 
For integers $u, v$ with $u \le v \le k$ and $S \subset F_u$, we have  
\begin{enumerate}
\item [(a)] \cite[Corollary 3.2]{BD} $\nabla_v(L(S)) \subseteq L (\nabla_v(S))$ and thus, $|\nabla_v (L(S))| \le |\nabla_v(S)|$.
\item[(b)] \cite[Corollary 3.3]{BD}  $|\nabla (L(S))| \le |\nabla (S)|$.
\end{enumerate}
\end{corollary}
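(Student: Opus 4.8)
The plan is to establish part (a) by induction on the difference $v-u$, feeding the single-degree Clements--Lindstr\"om inequality (Theorem \ref{CLT}) into the induction one level at a time, and then to deduce part (b) by summing the graded estimates of (a) over all degrees. For the base case $v=u$, I would note that for any $W\subseteq F_u$ one has $\nabla_u(W)=W$: an element of $F_u$ that $\preceq_P$-dominates some $\mathbf{a}\in W\subseteq F_u$ has the same degree as $\mathbf{a}$ and hence equals it. Thus $\nabla_u(L(S))=L(S)=L(\nabla_u(S))$, so the inclusion holds (with equality).

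Before the inductive step I would isolate two elementary facts about the shadow. First, \emph{monotonicity}: $A\subseteq B$ implies $\nabla_w(A)\subseteq\nabla_w(B)$, which is immediate from the definition. Second, a \emph{graded transitivity}: for $W\subseteq F_u$ and any $v$ with $u\le v<k$,
\[
\nabla_{v+1}(W)=\nabla_{v+1}(\nabla_v(W)).
\]
The inclusion $\supseteq$ is clear; for $\subseteq$, if $\mathbf{b}\in\nabla_{v+1}(W)$ is witnessed by $\mathbf{a}\in W$ with $\mathbf{a}\preceq_P\mathbf{b}$, then since $F$ is a product of chains the interval between $\mathbf{a}$ and $\mathbf{b}$ contains a point of every intermediate degree; choosing one of degree $v$ yields $\mathbf{c}\in\nabla_v(W)$ with $\mathbf{c}\preceq_P\mathbf{b}$, so $\mathbf{b}\in\nabla_{v+1}(\nabla_v(W))$. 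The point requiring care is that this interpolating $\mathbf{c}$ must lie inside the truncated box $F$, which is automatic because $\mathbf{c}\preceq_P\mathbf{b}\in F$ forces each coordinate of $\mathbf{c}$ to respect the bound $d_i-1$.

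With these in hand, the inductive step chains one application of each ingredient. Assuming $\nabla_v(L(S))\subseteq L(\nabla_v(S))$, I would write
\[
\nabla_{v+1}(L(S))=\nabla_{v+1}(\nabla_v(L(S)))\subseteq\nabla_{v+1}(L(\nabla_v(S)))\subseteq L(\nabla_{v+1}(\nabla_v(S)))=L(\nabla_{v+1}(S)),
\]
where the first equality is transitivity applied to $W=L(S)$, the first inclusion combines the inductive hypothesis with monotonicity, the second inclusion is Theorem \ref{CLT} applied at degree $v$ to the set $\nabla_v(S)\subseteq F_v$ (legitimate since $v<k$), and the final equality is transitivity again. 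This proves the inclusion in (a), and since $L$ preserves cardinality it gives $|\nabla_v(L(S))|\le|\nabla_v(S)|$. For (b), I would partition the shadow by degree, $\nabla(W)=\bigsqcup_{v=u}^{k}\nabla_v(W)$ for $W\subseteq F_u$, and sum the inequality from (a) over $v=u,\dots,k$ to obtain $|\nabla(L(S))|\le|\nabla(S)|$.

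Once Theorem \ref{CLT} is granted there is no deep obstacle here; the genuine content beyond bookkeeping is the graded transitivity of $\nabla$, and its only subtlety is verifying that the interpolating element of the desired degree stays within the bounded box $F$ rather than escaping through a capped coordinate.
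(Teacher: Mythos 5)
Your proof is correct, and it is essentially the intended derivation: the paper offers no proof of its own (it cites \cite{BD} and calls the statement an easy corollary of Theorem \ref{CLT}), and your induction on $v-u$ --- combining the graded transitivity $\nabla_{v+1}(W)=\nabla_{v+1}(\nabla_v(W))$, monotonicity of the shadow, and one application of Theorem \ref{CLT} per step, then summing over degrees for part (b) --- is precisely the standard argument used in \cite{BD}. Your attention to the subtlety that the interpolating element stays inside the box $F$ is a correct and worthwhile detail to make explicit.
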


In order to prove our main results, we will also need the following lemma that can be found in \cite[Lemma 3.5 and Remark 3.6]{BD}. 

\begin{lemma}\label{proplex}
Fix integers $u, v$ with $u < v \le k$ and an element $\textbf{y} \in F_v$. If  $\textbf{a}_\textbf{y}:= \max_{lex} \{\textbf{f} \in F_u : \textbf{f} \le_{lex} \textbf{y}\}$, then $\textbf{a}_\textbf{y} \preceq_P \textbf{y}$.
\end{lemma}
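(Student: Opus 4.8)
The plan is to argue by induction on the number of variables $m$, the crux being to pin down the leading coordinate of $\textbf{a}_\textbf{y}$ and then recurse on the remaining coordinates. Write $\textbf{y} = (y_1, \dots, y_m)$ with $0 \le y_i \le d_i - 1$ and $\sum_i y_i = v$. First I would note that $\textbf{a}_\textbf{y}$ is well defined: since $v > u \ge 0$, there is always a competitor in $F_u$ that is $\le_{lex} \textbf{y}$ (for instance $(u, 0, \dots, 0)$ when $y_1 > u$), so the maximum is taken over a nonempty set. The key preliminary observation is the determination of $f_1$, the first coordinate of $\textbf{a}_\textbf{y}$. Because $\prec_{lex}$ gives absolute priority to the first coordinate, $f_1$ must be as large as possible subject to two constraints: $f_1 \le y_1$ (otherwise $\textbf{a}_\textbf{y} \succ_{lex} \textbf{y}$), and the residual weight $u - f_1$ must be nonnegative and at most $\sum_{i=2}^m (d_i - 1)$ so that it can be realized in the last $m-1$ coordinates. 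Since $y_1 \le d_1 - 1$, these force $f_1 = \min(y_1, u)$.

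For the base case $m = 1$ we have $F_u = \{(u)\}$ and $\textbf{y} = (v)$, so $\textbf{a}_\textbf{y} = (u)$ and $u < v$ gives $\textbf{a}_\textbf{y} \preceq_P \textbf{y}$ at once. For the inductive step I would split on the value of $f_1$. If $u < y_1$, then $f_1 = u$ and the residual weight is $0$, so $\textbf{a}_\textbf{y} = (u, 0, \dots, 0)$; comparing coordinatewise with $\textbf{y}$ gives $u < y_1$ and $0 \le y_i$ for $i \ge 2$, whence $\textbf{a}_\textbf{y} \preceq_P \textbf{y}$ directly. If instead $u \ge y_1$, then $f_1 = y_1$; since the leading coordinates agree, the constraint $\textbf{a}_\textbf{y} \le_{lex} \textbf{y}$ descends to a lexicographic comparison of the tails, and lexicographic maximality of $\textbf{a}_\textbf{y}$ forces its tail to be exactly the lex-largest element of $F'_{u'}$ that is $\le_{lex} \textbf{y}' := (y_2, \dots, y_m)$, where $F' := \{0,\dots,d_2-1\} \times \cdots \times \{0,\dots,d_m-1\}$ and $u' := u - y_1$. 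One checks $0 \le u' < v' := v - y_1 \le \sum_{i=2}^m (d_i - 1)$, so the inductive hypothesis applies to $\textbf{y}' \in F'_{v'}$ and yields $\textbf{a}_{\textbf{y}'} \preceq_P \textbf{y}'$. Prepending the common coordinate $y_1$ then gives $\textbf{a}_\textbf{y} = (y_1, \textbf{a}_{\textbf{y}'}) \preceq_P (y_1, \textbf{y}') = \textbf{y}$.

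The hard part will be the bookkeeping in the case $u \ge y_1$: one must justify carefully that global lexicographic maximality of the vector is equivalent to fixing $f_1 = y_1$ and then choosing the lex-maximal admissible tail, i.e.\ that the greedy rule ``match $\textbf{y}$ as long as the running degree budget permits'' really produces the global lex-maximum rather than some completion that spends the budget on later coordinates. A clean way to settle this, and an alternative to the induction, is to exhibit $\textbf{a}_\textbf{y}$ explicitly. Let $j$ be the largest index for which $\sum_{i=1}^{j} y_i \le u$, so that $0 \le j < m$ (the upper bound because $\sum_{i=1}^m y_i = v > u$). Then I would prove
$$\textbf{a}_\textbf{y} = \Bigl(y_1, \ldots, y_j,\; u - \sum_{i=1}^{j} y_i,\; 0, \ldots, 0\Bigr).$$
Granting this description, the conclusion $\textbf{a}_\textbf{y} \preceq_P \textbf{y}$ is immediate: the first $j$ entries coincide with those of $\textbf{y}$, the $(j+1)$-st satisfies $u - \sum_{i=1}^{j} y_i < y_{j+1}$ by maximality of $j$, and the remaining entries are $0 \le y_i$. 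The genuine work is in verifying that this vector is the lexicographic maximum, and in particular that the displayed ``drop'' coordinate is feasible; here the inequality $u - \sum_{i\le j} y_i < y_{j+1} \le d_{j+1}-1$ guarantees both that the drop is a legitimate value in $F$ and that no larger choice of that coordinate keeps the vector $\le_{lex} \textbf{y}$. This is exactly the greedy argument above, phrased non-recursively, and is where the care is required.
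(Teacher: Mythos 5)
Your proof is correct, but note that there is no in-paper proof to compare it against: the paper states Lemma~\ref{proplex} as a quotation from \cite[Lemma 3.5 and Remark 3.6]{BD} and gives no argument, so your write-up is a self-contained substitute rather than a variant of something in the text. Of your two routes, the closed-form description is the essential one: with $j$ the largest index such that $y_1+\cdots+y_j\le u$, the vector $\bigl(y_1,\dots,y_j,\,u-\sum_{i\le j}y_i,\,0,\dots,0\bigr)$ lies in $F_u$ (here $u-\sum_{i\le j}y_i < y_{j+1}\le d_{j+1}-1$ is exactly what is needed), is $\prec_{lex}\textbf{y}$, and is lex-maximal among competitors; partial-order domination by $\textbf{y}$ is then read off coordinatewise. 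The inductive version is also sound, provided you treat nonemptiness of the competitor set as part of the induction hypothesis (your opening remark only exhibits a competitor when $y_1>u$; in the case $y_1\le u$ nonemptiness comes from the induction, or from the explicit vector).

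One justification in your final paragraph is wrong as stated, though easily repaired. You claim that $u-\sum_{i\le j}y_i < y_{j+1}$ guarantees that ``no larger choice of that coordinate keeps the vector $\le_{lex}\textbf{y}$.'' That is not the binding constraint: a value $c$ with $u-\sum_{i\le j}y_i < c < y_{j+1}$ still gives a vector $\prec_{lex}\textbf{y}$; what it violates is membership in $F_u$, since the first $j+1$ coordinates would already sum to more than $u$ while the remaining ones are nonnegative. The correct case analysis for lex-maximality is: any $\textbf{f}\in F_u$ with $\textbf{f}\succ_{lex}\textbf{a}_\textbf{y}$ first differs from $\textbf{a}_\textbf{y}$ at some position $t$; if $t\le j$ then $\textbf{f}\succ_{lex}\textbf{y}$ (forbidden by the lex constraint), and if $t\ge j+1$ then $\deg\textbf{f}>u$ (forbidden by the degree budget). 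With that substitution your argument is complete.
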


The following two lemmas are motivated from their analogues \cite[Lemma 3.6 and Lemma 3.7]{BD}. We include the proofs for the sake of completeness. 

\begin{lemma}\label{claim}
Let $u, u_1, u_2$ be integers satisfying $-1 \le u_2 < u \le u_1 \le k$. Let $N(r)$ denote the first $r$ elements of $F_{u_2}^{u_1}$ in descending lexicographic order. If $N_u := N(r) \cap F_u$ and $r_u := |N_u|$, then  $$\nabla_{u_1} (N_u) \subseteq N_{u_1} \subseteq \nabla_{u_1} (N_u^*), $$ where $N_u^*$ consists of the first $r_u+1$ elements of $F_u$ in descending lexicographic order.
\end{lemma}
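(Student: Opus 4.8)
The plan is to first make the three sets $N_u$, $N_{u_1}$, $N_u^*$ explicit as subsets of single graded pieces. Since descending lexicographic order is a total order on $F_{u_2}^{u_1}$, the set $N(r)$ is just the collection of the $r$ lex-largest elements; writing $\textbf{w}$ for the $r$-th (hence lex-smallest) element of $N(r)$, we have $N(r) = \{\textbf{a} \in F_{u_2}^{u_1} : \textbf{a} \succeq_{lex} \textbf{w}\}$. Intersecting with $F_u$ (which is contained in $F_{u_2}^{u_1}$ because $u_2 < u \le u_1$) gives $N_u = \{\textbf{a} \in F_u : \textbf{a} \succeq_{lex} \textbf{w}\}$, namely the $r_u$ lex-largest elements of $F_u$, an honest lex-initial segment; similarly $N_{u_1} = \{\textbf{a} \in F_{u_1} : \textbf{a} \succeq_{lex} \textbf{w}\}$. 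Hence $N_u^* = N_u \cup \{\textbf{z}\}$, where $\textbf{z}$ is the lex-largest degree-$u$ element with $\textbf{z} \prec_{lex} \textbf{w}$ (the $(r_u+1)$-th element of $F_u$; its existence is exactly the standing assumption $r_u < |F_u|$ that makes $N_u^*$ well defined).

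For the inclusion $\nabla_{u_1}(N_u) \subseteq N_{u_1}$ I would argue directly from the elementary fact that $\textbf{a} \preceq_P \textbf{b}$ forces $\textbf{a} \preceq_{lex} \textbf{b}$ (at the first coordinate where they differ, the one in $\textbf{b}$ is larger). If $\textbf{b} \in \nabla_{u_1}(N_u)$, then $\textbf{b} \in F_{u_1}$ and $\textbf{a} \preceq_P \textbf{b}$ for some $\textbf{a} \in N_u$, so $\textbf{b} \succeq_{lex} \textbf{a} \succeq_{lex} \textbf{w}$. Since $\textbf{b} \in F_{u_1} \subseteq F_{u_2}^{u_1}$ and $\textbf{b} \succeq_{lex} \textbf{w}$, it lies among the first $r$ elements, i.e. $\textbf{b} \in N(r) \cap F_{u_1} = N_{u_1}$. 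This direction needs no Clements--Lindstr\"om input.

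The main work is the reverse inclusion $N_{u_1} \subseteq \nabla_{u_1}(N_u^*)$, and this is where Lemma \ref{proplex} enters. Given $\textbf{b} \in N_{u_1}$, set $\textbf{a}_\textbf{b} := \max_{lex}\{\textbf{f} \in F_u : \textbf{f} \preceq_{lex} \textbf{b}\}$; this set is nonempty because any $\textbf{b} \in F$ with $\deg \textbf{b} = u_1 \ge u$ dominates (in $\preceq_P$, hence in $\preceq_{lex}$) some degree-$u$ element obtained by lowering coordinates. When $u < u_1$, Lemma \ref{proplex} with $\textbf{y} = \textbf{b}$ and $v = u_1$ yields $\textbf{a}_\textbf{b} \preceq_P \textbf{b}$, while for $u = u_1$ one has $\textbf{a}_\textbf{b} = \textbf{b}$ and the relation is trivial. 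It then remains only to locate $\textbf{a}_\textbf{b}$ inside $N_u^*$.

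To finish I would use $\textbf{z}$ directly. Since $\textbf{b} \in N_{u_1}$ gives $\textbf{b} \succeq_{lex} \textbf{w}$ and $\textbf{z} \prec_{lex} \textbf{w}$, the element $\textbf{z}$ is a degree-$u$ element with $\textbf{z} \prec_{lex} \textbf{b}$, so it belongs to the set over which $\textbf{a}_\textbf{b}$ is the lexicographic maximum; therefore $\textbf{a}_\textbf{b} \succeq_{lex} \textbf{z}$, i.e. $\textbf{a}_\textbf{b}$ is lex at least the $(r_u+1)$-th element of $F_u$, so $\textbf{a}_\textbf{b} \in N_u^*$. Combined with $\textbf{a}_\textbf{b} \preceq_P \textbf{b}$ this gives $\textbf{b} \in \nabla_{u_1}(N_u^*)$. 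I expect the delicate point to be precisely the bookkeeping around the extra element $\textbf{z}$: the ``$+1$'' in $N_u^*$ is genuinely needed only in the case $\deg \textbf{w} \neq u$, where the lex-predecessor of $\textbf{w}$ inside $F_u$ drops strictly below every element of $N_u$, whereas if $\deg\textbf{w}=u$ one even gets $N_{u_1}\subseteq\nabla_{u_1}(N_u)$. Getting this distinction right, together with the verification that $N_u$ really is a lex-initial segment of $F_u$, is the crux of the argument.
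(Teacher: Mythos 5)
Your proof is correct and takes essentially the same route as the paper: the first inclusion via the fact that $\preceq_P$ implies $\preceq_{lex}$ together with $N(r)$ being a lex-initial segment, and the second inclusion via Lemma \ref{proplex} applied to $\textbf{a}_{\textbf{b}} = \max_{lex}\{\textbf{f} \in F_u : \textbf{f} \preceq_{lex} \textbf{b}\}$. The only (cosmetic) difference is at the final step: where the paper locates $\textbf{a}_{\textbf{b}}$ inside $N_u^*$ by a case analysis ending in a contradiction with $|N_u| = r_u$, you argue directly that $\textbf{z} \prec_{lex} \textbf{w} \preceq_{lex} \textbf{b}$ forces $\textbf{a}_{\textbf{b}} \succeq_{lex} \textbf{z}$, which is the same underlying fact organized more cleanly.
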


\begin{proof}
The result is trivially true if $u = u_1$. So we may assume that $u < u_1$. Let $\textbf{y} \in \nabla_{u_1} (N_u)$. Then there exists $\textbf{x} \in N_u$ such that $\textbf{x} \preceq_P \textbf{y}$. Consequently $\textbf{x} \preceq_{lex} \textbf{y}$. Since $\textbf{x} \in N(r)$  and $\textbf{x} \preceq_{lex} \textbf{y}$, we have $\textbf{y} \in N(r)$. Since $\textbf{y} \in F_{u_1}$, we have  $\textbf{y} \in N(r) \cap F_{u_1} = N_{u_1}$. 

Now let $\textbf{y} \in N_{u_1}$. Define $\textbf{a}:= \max_{lex} \{\textbf{f} \in F_u : \textbf{f} \preceq_{lex} \textbf{y}\}$. From Lemma \ref{proplex}, we obtain $\textbf{a} \le_P \textbf{y}$. If $\textbf{a} \in N_u$, then $\textbf{a} \in N_{u^*}$, which proves the assertion. So we may assume that $\textbf{a} \not\in N_u$. Clearly, the set $N_u$ consists of the first $r_u$ elements of $F_u$ in descending lexicographic order. If we write $N_{u^*} =\{ \textbf{f}_1, \dots, \textbf{f}_{r_u + 1}\}$, then  $\textbf{a} \preceq_{lex} \textbf{f}_{r_u + 1}$.
If $\textbf{a}  = \textbf{f}_{r_u + 1}$, then  $\textbf{a} \in N_u^*$, and the assertion follows. Now suppose, if possible, that $\textbf{a} \prec_{lex} \textbf{f}_{r_u + 1}$. The maximality of $\textbf{a}$ implies that $\textbf{y} \prec_{lex} \textbf{f}_{r_u + 1}$. Since $\textbf{y} \in N(r)$, it follows that $\textbf{f}_{r_u + 1} \in N(r)$ and hence $\textbf{f}_{r_u + 1} \in N_u$. This contradicts $|N_u| = r_u$. This completes the proof. 
\end{proof}

\begin{lemma}\label{claim2}
With notations as in Lemma \ref{claim} and $u_2 < u_1 - 1,$ we have $$|\nabla (N(r))| = r - |N_{u_1}| + |\nabla (N_{u_1})|.$$
\end{lemma}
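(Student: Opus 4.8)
The plan is to prove the stronger set-theoretic identity
$$\nabla(N(r)) = \nabla(N_{u_1}) \,\sqcup\, \left(N(r) \cap F_{u_2}^{u_1 - 1}\right),$$
from which the claimed count drops out immediately: the two pieces on the right are disjoint, and since $N(r) = N_{u_1} \sqcup \left(N(r) \cap F_{u_2}^{u_1-1}\right)$ we have $\left|N(r) \cap F_{u_2}^{u_1-1}\right| = r - |N_{u_1}|$, so $|\nabla(N(r))| = |\nabla(N_{u_1})| + r - |N_{u_1}|$. Throughout I rely on two elementary facts. First, $\textbf{a} \preceq_P \textbf{b}$ forces $\textbf{a} \preceq_{lex} \textbf{b}$, obtained by comparing the first coordinate in which the two vectors differ. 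Second, $N(r)$, being the $r$ lexicographically largest elements of $F_{u_2}^{u_1}$, is \emph{upward closed} for $\preceq_{lex}$ inside $F_{u_2}^{u_1}$: if $\textbf{b} \in F_{u_2}^{u_1}$ and $\textbf{a} \preceq_{lex} \textbf{b}$ for some $\textbf{a} \in N(r)$, then $\textbf{b}$ sits at least as high in descending lex order as a member of the top $r$, hence $\textbf{b} \in N(r)$.

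Disjointness and the inclusion $\supseteq$ come for free. Every element of $\nabla(N_{u_1})$ has degree at least $u_1$, whereas every element of $N(r) \cap F_{u_2}^{u_1-1}$ has degree at most $u_1 - 1$, so the two sets are disjoint. Moreover $N_{u_1} \subseteq N(r)$ gives $\nabla(N_{u_1}) \subseteq \nabla(N(r))$, and plainly $N(r) \cap F_{u_2}^{u_1-1} \subseteq N(r) \subseteq \nabla(N(r))$.

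The substance is the reverse inclusion $\subseteq$. Fix $\textbf{b} \in \nabla(N(r))$, so $\textbf{a} \preceq_P \textbf{b}$ for some $\textbf{a} \in N(r)$, with $u_2 < \deg(\textbf{a}) \le u_1$. If $\deg(\textbf{b}) \le u_1 - 1$, then $\textbf{b} \in F_{u_2}^{u_1}$ and $\textbf{a} \preceq_{lex} \textbf{b}$, so upward closure gives $\textbf{b} \in N(r)$, whence $\textbf{b} \in N(r) \cap F_{u_2}^{u_1-1}$. If instead $\deg(\textbf{b}) \ge u_1$, I interpolate: starting from $\textbf{a}$ and raising its coordinates one unit at a time toward $\textbf{b}$ (each remains within its bound since $a_i \le b_i \le d_i - 1$), the degree increases by $1$ at each step from $\deg(\textbf{a}) \le u_1$ to $\deg(\textbf{b}) \ge u_1$, so at some stage I reach $\textbf{c} \in F$ with $\textbf{a} \preceq_P \textbf{c} \preceq_P \textbf{b}$ and $\deg(\textbf{c}) = u_1$. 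Then $\textbf{c} \in F_{u_1} \subseteq F_{u_2}^{u_1}$ and $\textbf{a} \preceq_{lex} \textbf{c}$, so upward closure yields $\textbf{c} \in N(r) \cap F_{u_1} = N_{u_1}$; since $\textbf{c} \preceq_P \textbf{b}$, we conclude $\textbf{b} \in \nabla(N_{u_1})$.

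The only delicate point is the high-degree case: the existence of a degree-$u_1$ element sandwiched between $\textbf{a}$ and $\textbf{b}$ in the box $F$ is precisely what lets the top layer $N_{u_1}$ absorb the entire part of $\nabla(N(r))$ lying above degree $u_1$, while the low-degree case reduces to the lex-compression property of $N(r)$. The hypothesis $u_2 < u_1 - 1$ serves only to exclude the degenerate situation $u_2 = u_1 - 1$, where $F_{u_2}^{u_1} = F_{u_1}$ forces $N(r) = N_{u_1}$ and the identity is trivial. Notably, this argument does not appear to require the Clements--Lindström theorem or Lemma \ref{claim}.
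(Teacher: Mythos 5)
Your proof is correct, and the set-theoretic identity you establish, $\nabla(N(r)) = \nabla(N_{u_1}) \sqcup \left(N(r) \cap F_{u_2}^{u_1-1}\right)$, is precisely the decomposition underlying the paper's own proof, which amounts to showing $\nabla(N(r)) \cap F_{\ge u_1} = \nabla(N_{u_1})$ and $\nabla(N(r)) \cap F_{<u_1} = N(r) \setminus N_{u_1}$. The difference is in execution. The paper derives both pieces from Lemma \ref{claim}: the inclusion $\nabla_{u_1}(N_u) \subseteq N_{u_1}$ (equation \eqref{bigcup}) handles degrees $\ge u_1$, and the low-degree part is obtained by re-applying that inclusion iteratively to $N(r) \setminus N_{u_1}$, viewed as an initial segment of $F_{u_2}^{u_1-1}$, peeling off one degree layer at a time. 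You instead argue directly: all low degrees are absorbed at once by lex upward closure of $N(r)$, and the high-degree case is settled by an explicit coordinate-by-coordinate interpolation producing an element of degree exactly $u_1$ between $\mathbf{a}$ and $\mathbf{b}$. That interpolation step is worth emphasizing: the paper's claim that \eqref{bigcup} ``implies'' $|\nabla(N(r)) \cap F_{\ge u_1}| = |\nabla(N_{u_1})|$ silently uses exactly this fact (an element of degree strictly greater than $u_1$ lying above some $\mathbf{a} \in N_u$ is only seen to lie above $N_{u_1}$ via such an intermediate element), so your write-up fills in a detail the paper glosses over. Your closing remark is also accurate: neither your proof nor the paper's needs Clements--Lindstr\"om here, since the only part of Lemma \ref{claim} the paper invokes is itself just the upward-closure argument; Theorem \ref{CLT} enters the paper only through Corollary \ref{cl} in the proof of Theorem \ref{w}. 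What the paper's route buys is economy, reusing Lemma \ref{claim}, which is needed anyway for Theorem \ref{w}; what yours buys is a self-contained argument that, incidentally, also shows the hypothesis $u_2 < u_1 - 1$ is not actually needed for this identity, exactly as you observe.
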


\begin{proof}
It follows from Lemma \ref{claim} that,
\begin{equation}\label{bigcup}
\bigcup_{u_2 < u \le u_1} \nabla_{u_1} (N_u) \subset N_{u_1}.
\end{equation}
This implies,
\begin{align*}
|\nabla(N(r))| &= |\nabla (N(r)) \cap F_{<u_1}| +  |\nabla (N(r)) \cap F_{\ge u_1}| \\
&=  |\nabla (N(r) \setminus N_{u_1}) \cap F_{<u_1}| + |\nabla (N_{u_1})|.
\end{align*}
Note that, $N(r) \setminus N_{u_1}$ consists of the first $r - |N_{u_1}|$ elements of $F_{u_2}^{u_1 - 1}$ in descending lexicographic order. 
We obtain by applying \eqref{bigcup} to $N(r) \setminus N_{u_1}$ (on $F_{u_2}^{u_1 - 1}$) that $\nabla_{u_1-1} (N(r) \setminus N(u_1)) \subset N_{u_1-1}$. Also, $N_{u_1 - 1} \subset N(r) \setminus N(u_1)$. This implies that $\nabla_{u_1-1} (N(r) \setminus N(u_1)) = N_{u_1-1}$. Repeating the argument iteratively we deduce that,
$$\nabla_{u} (N(r) \setminus N(u_2)) = N_{u} \ \ \ \text{for all} \ \ \ u_2 < u \le u_1-1.$$ 
Consequently, $\nabla (N(r) \setminus N_{u_1}) \cap F_{<u_1} = N(r) \setminus N_{u_1}$, which  proves the lemma.
\end{proof}

We are now ready to state and prove the main theorem of this section. This is a generalization of \cite[Theorem 3.8]{BD}.
Further special cases, when $d_1 = \dots = d_m = q$, appear as \cite[Lemma 6]{W}, \cite[Theorem 5.7]{HP} and \cite[Lemma 4.6]{GM}.

\begin{theorem}\label{w}
Let $u_1, u_2, u, r$ be integers with $-1 \le u_2 < u \le u_1 \le k$ and let $S \subseteq F_{u_2}^{u_1}$ with $|S| = r$. Then $|\nabla  (N (r))| \le |\nabla (S)|$. In particular, given any $S \in \mathcal{F}_r$, we have $|\Delta (S)| \le |\Delta(N(r))|$. Consequently, $$|\Delta(N(r)| = \max \{|\Delta(S)| : S \in \mathcal{F}_r \}.$$
\end{theorem}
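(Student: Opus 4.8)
The plan is to show that among all $r$-subsets $S \subseteq F_{u_2}^{u_1}$, the lexicographically-initial set $N(r)$ minimizes the size of the shadow $\nabla(S)$; the statement about $\Delta(S) = F \setminus \nabla(S)$ being maximized then follows immediately by complementation, since all the sets $\Delta(S)$ live in the same ambient $F$. So the entire content is the inequality $|\nabla(N(r))| \le |\nabla(S)|$, and I would prove it by reducing to the ``single-degree'' compression results already established, namely Corollary \ref{cl} and Lemma \ref{claim2}.

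The natural strategy is a slicing argument by degree together with a compression/replacement procedure. First I would decompose any $S$ according to degree, writing $S_u := S \cap F_u$ for $u_2 < u \le u_1$, so that $S = \bigsqcup_u S_u$ and $\nabla(S) = \bigcup_u \nabla(S_u)$. The goal is to replace $S$ by $N(r)$ without increasing $|\nabla(S)|$, in two conceptual stages. In the first stage, within each degree slice $F_u$ I would replace $S_u$ by $L(S_u)$, the lexicographically-first $|S_u|$ elements of $F_u$; by Corollary \ref{cl}(b) this does not increase $|\nabla(S_u)|$, and more precisely by Corollary \ref{cl}(a) the degree-by-degree shadows only shrink, so the union $\bigcup_u \nabla(L(S_u))$ has size at most $|\nabla(S)|$. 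This reduces the problem to sets that are lex-initial within each degree, i.e. determined solely by the profile $(r_{u_2+1}, \dots, r_{u_1})$ with $r_u := |S_u|$ and $\sum_u r_u = r$.

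In the second stage I would argue that among all such ``lex-compressed-in-each-slice'' sets, the profile realized by $N(r)$ itself is optimal. This is where the interaction between slices matters and where the main work lies: the set $N(r)$ is not in general lex-initial within each individual slice in an independent way, but rather globally lex-initial across $F_{u_2}^{u_1}$, and Lemma \ref{claim} is precisely the tool that controls how the shadow of one slice of $N(r)$ lands inside the next slice, giving the telescoping identity of Lemma \ref{claim2}. Using $|\nabla(N(r))| = r - |N_{u_1}| + |\nabla(N_{u_1})|$ from Lemma \ref{claim2}, the task becomes comparing this explicit quantity against $|\nabla(S)|$ for a general compressed $S$. I would set this up by an exchange/shifting argument on the profile: if the profile of $S$ differs from that of $N(r)$, I would show one can move an element from a lower-degree slice to a higher-degree slice (or conversely, push mass toward the lexicographically-smallest positions as $N(r)$ does) without increasing the total shadow, using the containment $\nabla_{u_1}(N_u) \subseteq N_{u_1}$ of Lemma \ref{claim} to guarantee that shadows already accounted for in higher slices are not double-counted but are absorbed.

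The hard part will be handling the overlaps between the shadows of different degree slices, since $|\nabla(S)| = |\bigcup_u \nabla(S_u)|$ is a union and not a disjoint sum, so naive slice-by-slice minimization (Corollary \ref{cl}) is insufficient on its own and can be deceptive. The key insight that makes the argument go through is exactly the nesting property that $N(r)$ enjoys, captured in Lemma \ref{claim}: the shadow of the degree-$u$ part of $N(r)$ into degree $u_1$ is contained in the degree-$u_1$ part of $N(r)$, so for $N(r)$ the union collapses into a clean telescoping count (Lemma \ref{claim2}) with no wasted overlap. For a competing set $S$ the overlaps can only be smaller in a way that does not help it, and I expect the cleanest route is induction on $u_1 - u_2$: peel off the top slice $F_{u_1}$, apply Corollary \ref{cl} to compare $\nabla(S_{u_1})$ with $\nabla(N_{u_1})$ after compression, and invoke the inductive hypothesis on $F_{u_2}^{u_1-1}$ for the remaining $r - r_{u_1}$ elements, reconciling the two pieces via the containments of Lemma \ref{claim}. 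Making this reconciliation precise — ensuring the inductive bound on the lower strip combines correctly with the top-slice comparison despite the shadow overlaps — is the step I anticipate requiring the most care.
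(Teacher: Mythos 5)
Your plan correctly identifies the relevant tools (Corollary \ref{cl}, Lemmas \ref{claim} and \ref{claim2}) and correctly diagnoses the central difficulty (the shadows of different degree slices overlap, so slice-wise compression alone cannot finish the job), but the proof has a genuine gap: the entire content of the theorem sits in your ``second stage,'' and you do not carry it out. After slice-wise compression you must show that, among all slice-wise lex-initial subsets of $F_{u_2}^{u_1}$ of total size $r$, the particular profile realized by $N(r)$ minimizes the shadow. You describe this as an ``exchange/shifting argument'' plus induction on $u_1 - u_2$, but you never specify the exchange --- indeed you hedge on its direction (``from a lower-degree slice to a higher-degree slice (or conversely)'') --- and you explicitly concede that reconciling the top-slice comparison with the inductive bound on the lower strip ``despite the shadow overlaps'' is unresolved. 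That is not a routine verification; it is exactly where a new idea is needed. There is also a smaller, fixable gap in your first stage: from $|\nabla_v(L(S_u))| \le |\nabla_v(S_u)|$ for each $u$ separately one cannot conclude that $|\bigcup_u \nabla_v(L(S_u))| \le |\bigcup_u \nabla_v(S_u)|$, since a union of smaller sets need not be smaller; what rescues this step is the stronger containment $\nabla_v(L(S_u)) \subseteq L(\nabla_v(S_u))$ of Corollary \ref{cl}(a) together with the fact that lex-initial segments of $F_v$ are nested, so the left-hand union has size at most $\max_u |\nabla_v(S_u)|$. (Incidentally, $N(r)$ \emph{is} lex-initial within each individual slice --- the paper relies on this in the proof of Lemma \ref{claim} --- contrary to your remark; what is special about $N(r)$ is only its degree profile.)

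The missing idea can be seen in how the paper avoids your stage-two problem altogether: it never compares whole profiles. It first disposes of overlaps with the trivial inclusion $S \cap F_{<u_1} \subseteq \nabla_{<u_1}(S)$, which gives $|\nabla(S)| \ge r - |S_{u_1}| + |\nabla_{\ge u_1}(S)|$, and then splits into two cases according to the size of the single top slice. If $|S_{u_1}| \ge |N_{u_1}|$, Corollary \ref{cl}(b) inside $F_{u_1}$ yields $|\nabla(S_{u_1})| \ge |\nabla(N_{u_1})| + (|S_{u_1}| - |N_{u_1}|)$, and Lemma \ref{claim2} closes the count. If $|S_{u_1}| < |N_{u_1}|$, pigeonhole produces one over-full lower slice $S_u$ with $|S_u| > |N_u|$, and then Lemma \ref{claim} combined with Corollary \ref{cl}(a) gives $|N_{u_1}| \le |\nabla_{u_1}(N_u^*)| \le |\nabla_{u_1}(S_u)|$, so the degree-$\ge u_1$ part of $\nabla(S)$ alone already dominates $|\nabla(N_{u_1})|$; Lemma \ref{claim2} again finishes. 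In short, the proof substitutes a single well-chosen slice of $S$ for the top slice of $N(r)$, rather than performing a global exchange over profiles; without this (or an equivalent mechanism), your induction cannot be closed.
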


\begin{proof}
For $u_2 < u \le u_1$, define $S_u := S \cap F_u$ and $N_u := N(r) \cap F_u$. When $u_2 = u_1 - 1$, then the assertion follows directly from Theorem \ref{CLT}. Henceforth, we will always assume that $u_2 < u_1 - 1.$ We distinguish the proof in two cases:

\textbf{Case 1:}  Suppose that $|S_{u_1}| \ge r_{u_1}$. Then $|S_{u_1}| = r_{u_1} + \alpha$ for some $\alpha \ge 0$. We may write $S_{u_1} = S' \cup S''$, where $S'$ denotes the first $r_{u_1}$ elements of $S$ in descending lexicographic order and $S'' = S \setminus S'$. It follows easily that $|S''| = \alpha$ and that $S''$ is disjoint from $\nabla(S)$ and $\nabla (N_{u_1})$. By applying Corollary \ref{cl} (b) to $S'$, we see that $|\nabla (S')| \ge |\nabla (N_{u_1})|$. This shows that
$|\nabla (S_{u_1})| \ge |\nabla (N_{u_1})| + \alpha$. We note that,

\begin{align}
|\nabla (S)| &= |\nabla_{<u_1} (S)| + |\nabla_{\ge u_1}(S)| \nonumber\\
&\ge |\nabla_{<u_1} (S)| + |\nabla (S_{u_1})|  \nonumber\\ 
&\ge |S \cap F_{<u_1}| + |\nabla (S_{u_1})| \nonumber\\
& = r - |S_{u_1}| + |\nabla (S_{u_1})|.   \label{1}
\end{align}

This gives
$$|\nabla (S)|  = r - |S_{u_1}| + |\nabla (S_{u_1})| 
\ge r - r_{u_1} - \alpha + |\nabla (N_{u_1})| + \alpha 
= |\nabla (N(S))|.$$ 

\smallskip
The last equality follows from Lemma \ref{claim2} and the proof is complete in this case. 
\textbf{Case 2:}
Now suppose that $|S_{u_1}|< r_{u_1}$. Since $|S|=r=|N(r)|$, there exists and integer $u$ with $u_2 < u <{u_1}$ such that $|S_u|>|N_u|$ and consequently, $|N_u^*| \le |S_u|$. By Lemma \ref{claim} and Corollary \ref{cl} (a) we have $|N_{u_1}| \le |\nabla_{u_1}(N_u^*)| \le |\nabla_{u_1}(S_u)|.$ Thus,
\begin{align*}
|\nabla(S)| &\ge r-|S_{u_1}|+|\nabla_{\ge {u_1}}(S)| \ \ \ \ \ (\text{follows from  \eqref{1}})\\
&> r - |N_{u_1}|+|\nabla_{\ge {u_1}}(S_u)| \\
&= r - |N_{u_1}|+|\nabla(\nabla_{u_1}(S_u))| \\
&\ge r - |N_{u_1}|+|\nabla(N_{u_1})|=|\nabla(N(S))|.
\end{align*}
The last equality follows from Lemma \ref{claim2}. The last two assertions are now obvious. 
\end{proof}
In order to answer Question \ref{q3} we must now determine $|\nabla (N(r))|$. To proceed we will need the following Lemma that was proved in \cite[Lemma 4.2]{BD}.

\begin{lemma}\cite[Lemma 4.2]{BD}\label{fp}
Let $d >0$ be an integer and $\textbf{a}_1, \dots, \textbf{a}_r$ be the first $r$ elements of $F_{\le d}$ in descending lexicographic order. Then,
$$\nabla (\textbf{a}_1, \dots, \textbf{a}_r) = \{\textbf{a} \in F : \textbf{a}_r \le_{lex} \textbf{a}\}.$$
Moreover, if $\textbf{a}_r = (a_{r, 1}, \dots, a_{r, m})$ then 
$$|\nabla (\textbf{a}_1, \dots, \textbf{a}_r)| = d_1 \cdots d_m  - \displaystyle{\sum_{i=1}^m a_{r, i} \prod_{j=i+1}^m d_j}.$$
\end{lemma}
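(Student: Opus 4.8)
The plan is to treat the two assertions separately: first establish the set-theoretic identity for $\nabla(\textbf{a}_1, \dots, \textbf{a}_r)$, and then deduce the cardinality formula from it by an elementary counting argument. The crucial observation underlying everything is that, since $\textbf{a}_1, \dots, \textbf{a}_r$ are the first $r$ elements of $F_{\le d}$ in descending lexicographic order, $\textbf{a}_r$ is the lex-smallest among them and $\{\textbf{a}_1, \dots, \textbf{a}_r\}$ is precisely the set of $\textbf{f} \in F_{\le d}$ with $\textbf{a}_r \preceq_{lex} \textbf{f}$.

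For the identity I would prove both inclusions. The inclusion $\subseteq$ is the easy one: if $\textbf{b} \in \nabla(\textbf{a}_1, \dots, \textbf{a}_r)$ then $\textbf{a}_i \preceq_P \textbf{b}$ for some $i$, and since $\preceq_P$ refines $\preceq_{lex}$ this gives $\textbf{a}_i \preceq_{lex} \textbf{b}$, whence $\textbf{a}_r \preceq_{lex} \textbf{a}_i \preceq_{lex} \textbf{b}$.

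The substance lies in the reverse inclusion: given $\textbf{b} \in F$ with $\textbf{a}_r \preceq_{lex} \textbf{b}$, I must produce some $\textbf{a}_i \preceq_P \textbf{b}$, equivalently an element $\textbf{c} \in F_{\le d}$ with $\textbf{c} \preceq_P \textbf{b}$ and $\textbf{a}_r \preceq_{lex} \textbf{c}$. If $\deg(\textbf{b}) \le d$ then $\textbf{c} = \textbf{b}$ works, so assume $\deg(\textbf{b}) > d$. I would truncate $\textbf{b}$: let $i$ be the least index with $b_1 + \cdots + b_i > d$ and set $\textbf{c} := (b_1, \dots, b_{i-1}, d - (b_1 + \cdots + b_{i-1}), 0, \dots, 0)$, so that $\deg(\textbf{c}) = d$ and $\textbf{c} \preceq_P \textbf{b}$ by construction; alternatively one may take $\textbf{c} = \max_{lex}\{\textbf{f} \in F_d : \textbf{f} \preceq_{lex} \textbf{b}\}$ and invoke Lemma \ref{proplex} with $u = d$ for the relation $\textbf{c} \preceq_P \textbf{b}$. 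The main obstacle is verifying $\textbf{a}_r \preceq_{lex} \textbf{c}$. Here I would compare $\textbf{a}_r$ and $\textbf{b}$ at their first differing coordinate $p$: the constraint $\deg(\textbf{a}_r) \le d$ forces $p \le i$, since if $p > i$ then $\textbf{a}_r$ and $\textbf{b}$ agree through position $i$, giving $\sum_{j \le i} a_{r,j} > d$, a contradiction. When $p < i$ the vector $\textbf{c}$ still agrees with $\textbf{b}$ at position $p$ and so exceeds $\textbf{a}_r$ there; when $p = i$ the bound $\sum_{j \ge i} a_{r,j} \le d - \sum_{j < i} a_{r,j} = c_i$ gives $a_{r,i} \le c_i$, with equality only if $\textbf{a}_r = \textbf{c}$. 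In every case $\textbf{a}_r \preceq_{lex} \textbf{c}$, so $\textbf{c} \in \{\textbf{a}_1, \dots, \textbf{a}_r\}$ and $\textbf{b} \in \nabla(\textbf{a}_1, \dots, \textbf{a}_r)$.

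Finally, for the cardinality I would use the identity to write $|\nabla(\textbf{a}_1, \dots, \textbf{a}_r)| = |F| - |\{\textbf{a} \in F : \textbf{a} \prec_{lex} \textbf{a}_r\}|$. Since $|F| = d_1 \cdots d_m$ and an element strictly below $\textbf{a}_r$ is determined by the first index $i$ at which it drops below $\textbf{a}_r$, contributing $a_{r,i}$ admissible values at position $i$ and $\prod_{j=i+1}^m d_j$ free choices thereafter, one gets $|\{\textbf{a} \prec_{lex} \textbf{a}_r\}| = \sum_{i=1}^m a_{r,i} \prod_{j=i+1}^m d_j$, and subtraction yields the stated formula. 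I expect the lexicographic case analysis in the reverse inclusion, hinging entirely on the degree bound $\deg(\textbf{a}_r) \le d$, to be the only genuinely delicate point; the counting step is routine mixed-radix bookkeeping.
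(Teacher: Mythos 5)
Your proof is correct. There is, however, nothing in the paper itself to compare it against: the lemma is quoted from \cite[Lemma 4.2]{BD} without proof, so you have in effect reconstructed the missing argument, and your route is the natural one. The structure — identify $\nabla(\textbf{a}_1,\dots,\textbf{a}_r)$ with the lex-terminal segment $\{\textbf{a}\in F : \textbf{a}_r \preceq_{lex} \textbf{a}\}$, where the only substantive step is producing, for each $\textbf{b}$ with $\deg(\textbf{b})>d$ and $\textbf{a}_r \preceq_{lex}\textbf{b}$, an element of $F_{\le d}$ below $\textbf{b}$ in the partial order and above $\textbf{a}_r$ lexicographically, and then count the complement by a mixed-radix argument — is essentially the argument of \cite{BD}. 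The delicate points are all handled: your truncation $\textbf{c}$ has degree exactly $d$ and lies in $F$ (since $c_j\le b_j$ for all $j$, with $c_i\ge 0$ by minimality of $i$); the first index $p$ at which $\textbf{a}_r$ and $\textbf{b}$ differ satisfies $p\le i$ precisely because $\deg(\textbf{a}_r)\le d$; and in the boundary case $p=i$ with $a_{r,i}=c_i$ you correctly note that this forces $a_{r,j}=0$ for all $j>i$, i.e.\ $\textbf{a}_r=\textbf{c}$, so the conclusion $\textbf{a}_r\preceq_{lex}\textbf{c}$ holds in every case. Your alternative construction — taking $\textbf{c}=\max_{lex}\{\textbf{f}\in F_d : \textbf{f}\preceq_{lex}\textbf{b}\}$ and invoking Lemma \ref{proplex} with $u=d$, $v=\deg(\textbf{b})$ — also works and has the merit of reusing a tool the paper already quotes; note only that one must check the set being maximized over is nonempty, and your explicit truncation supplies exactly such a witness.
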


The following Proposition, where we compute the $|\nabla (N(r))|$ completes our pursuit of answering Question \ref{q3}.

\begin{proposition}\label{card}
Let $u_1, u_2, u, r$ be as before. Assume that $N(r) := \{\textbf{a}_1, \dots, \textbf{a}_r\}$. Suppose $\textbf{a}_r$ is the $s$-th element of $F_{\le u_1}$ in descending lexicographic order. Then, $$|\nabla (\textbf{a}_1, \dots, \textbf{a}_r)| = d_1 \cdots d_m  - \displaystyle{\sum_{i=1}^m a_{r, i} \prod_{j=i+1}^m d_j} -s + r.$$
\end{proposition}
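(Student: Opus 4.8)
The plan is to read off the claimed value directly from Lemma~\ref{fp} after comparing $N(r)=\{\textbf{a}_1,\dots,\textbf{a}_r\}$, an initial segment of $F_{u_2}^{u_1}$, with the corresponding initial segment of the larger set $F_{\le u_1}$. First I would let $\textbf{b}_1,\dots,\textbf{b}_s$ be the first $s$ elements of $F_{\le u_1}$ in descending lexicographic order; since $\textbf{a}_r$ is by hypothesis the $s$-th such element, $\textbf{b}_s=\textbf{a}_r$. Applying Lemma~\ref{fp} with degree bound $u_1$ then yields at once both the description $\nabla(\textbf{b}_1,\dots,\textbf{b}_s)=\{\textbf{a}\in F:\textbf{a}_r\preceq_{lex}\textbf{a}\}$ and the value $|\nabla(\textbf{b}_1,\dots,\textbf{b}_s)|=d_1\cdots d_m-\sum_{i=1}^m a_{r,i}\prod_{j=i+1}^m d_j$, which is exactly the first two terms of the asserted formula.

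Next I would partition the initial segment $\{\textbf{b}_1,\dots,\textbf{b}_s\}=\{\textbf{a}\in F_{\le u_1}:\textbf{a}_r\preceq_{lex}\textbf{a}\}$ according to degree. Its elements of degree strictly greater than $u_2$ are precisely those of $F_{u_2}^{u_1}$ that are lexicographically at least $\textbf{a}_r$; since $N(r)$ is the initial segment of $F_{u_2}^{u_1}$ whose smallest element is $\textbf{a}_r$, these are exactly $\textbf{a}_1,\dots,\textbf{a}_r$. The remaining elements form $T:=\{\textbf{a}\in F_{\le u_2}:\textbf{a}_r\preceq_{lex}\textbf{a}\}$, and counting the two pieces against $s$ gives $|T|=s-r$. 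The whole proof then reduces to the set identity $\nabla(\textbf{a}_1,\dots,\textbf{a}_r)=\nabla(\textbf{b}_1,\dots,\textbf{b}_s)\setminus T$, after which $|\nabla(N(r))|=|\nabla(\textbf{b}_1,\dots,\textbf{b}_s)|-|T|$ produces the final term $-s+r$ and closes the argument.

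For the identity, the inclusion $\subseteq$ is immediate: enlarging the generating set enlarges the shadow, so $\nabla(\textbf{a}_1,\dots,\textbf{a}_r)\subseteq\nabla(\textbf{b}_1,\dots,\textbf{b}_s)$, while every element of $\nabla(\textbf{a}_1,\dots,\textbf{a}_r)$ lies above some $\textbf{a}_i$ and hence has degree $\ge\deg(\textbf{a}_i)>u_2$, so it cannot lie in $T$. The reverse inclusion is the main obstacle. Given $\textbf{c}\in\nabla(\textbf{b}_1,\dots,\textbf{b}_s)\setminus T$ I have $\textbf{a}_r\preceq_{lex}\textbf{c}$, and since $\textbf{c}\notin T$ also $\deg(\textbf{c})>u_2$. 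When $\deg(\textbf{c})\le u_1$ the point $\textbf{c}$ lies in $F_{u_2}^{u_1}$ with $\textbf{a}_r\preceq_{lex}\textbf{c}$, hence $\textbf{c}\in N(r)$ and trivially $\textbf{c}\in\nabla(\textbf{a}_1,\dots,\textbf{a}_r)$; the delicate case is $\deg(\textbf{c})>u_1$, where I must exhibit some $\textbf{a}_i\preceq_P\textbf{c}$.

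Here I would invoke Lemma~\ref{proplex} with the intermediate degree $u:=\deg(\textbf{a}_r)$ (so that $u_2<u\le u_1<\deg(\textbf{c})$), applied to $\textbf{y}=\textbf{c}$, producing $\textbf{a}_{\textbf{c}}:=\max_{lex}\{\textbf{f}\in F_{u}:\textbf{f}\preceq_{lex}\textbf{c}\}$ with $\textbf{a}_{\textbf{c}}\preceq_P\textbf{c}$. The reason for choosing $u=\deg(\textbf{a}_r)$ rather than $u_1$ is that then $\textbf{a}_r$ itself competes in this maximum (it lies in $F_u$ and satisfies $\textbf{a}_r\preceq_{lex}\textbf{c}$), forcing $\textbf{a}_r\preceq_{lex}\textbf{a}_{\textbf{c}}$; together with $\textbf{a}_{\textbf{c}}\in F_u\subseteq F_{u_2}^{u_1}$ this places $\textbf{a}_{\textbf{c}}$ inside the initial segment $N(r)$, so $\textbf{a}_{\textbf{c}}=\textbf{a}_i$ for some $i$ and therefore $\textbf{c}\in\nabla(\textbf{a}_1,\dots,\textbf{a}_r)$. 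I expect this choice of intermediate degree to be the one genuinely subtle point: taking $u=u_1$ would break down exactly when $\deg(\textbf{a}_r)<u_1$, since then no element of $F_{u_1}$ need lie lexicographically between $\textbf{a}_r$ and $\textbf{c}$, and the desired comparison $\textbf{a}_r\preceq_{lex}\textbf{a}_{\textbf{c}}$ could fail. The degenerate cases $u_2=u_1-1$ and $u_1=0$ are trivial and can be handled separately.
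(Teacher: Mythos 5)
Your proof is correct and takes essentially the same route as the paper's: both compare $N(r)$ with the initial segment of $F_{\le u_1}$ ending at $\textbf{a}_r$ (your $\{\textbf{b}_1, \dots, \textbf{b}_s\}$, the paper's $M_{u_1}(s)$), evaluate its shadow by Lemma \ref{fp}, and subtract the $s-r$ elements of degree at most $u_2$ via the identity $\nabla(N(r)) = \nabla(M_{u_1}(s)) \setminus \bigl(M_{u_1}(s) \setminus N(r)\bigr)$. The only difference is that the paper declares this identity ``easy to see,'' whereas you verify it in full --- in particular your application of Lemma \ref{proplex} at the intermediate degree $u = \deg(\textbf{a}_r)$, rather than at $u_1$, correctly handles the elements of degree exceeding $u_1$ and fills in the step the paper omits.
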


\begin{proof}
Let us denote by $M_{u_1} (s)$ the first $s$ elements of $F_{\le u_1}$ in descending lexicographic order. Clearly, $\textbf{a}_i \in M_{u_1} (s)$ for $i=1, \dots, r$. It is easy to see that 
$$\nabla (\textbf{a}_1, \dots, \textbf{a}_r) = \nabla (M_{u_1} (s)) \setminus \left(M_{u_1} (s) \setminus N(r)\right),$$
which proves that $$|\nabla (\textbf{a}_1, \dots, \textbf{a}_r)| = |\nabla (M_{u_1} (s))| - (s-r).$$
The assertion now follows from Lemma \ref{fp} by noting that $\textbf{a}_r$ is the $s$-th element of $M_{u_1} (s)$ in descending lexicographic order.
\end{proof}

We have thus answered the Question \ref{q3} completely and we note it down as the following corollary.

\begin{corollary}\label{combfinal}
Fix integers $u_1, u_2$ and $r$ with $-1 \le u_2 < u_1 \le k$. Denote by $\mathcal{F}_r$, the family of subsets of $F_{u_2}^{u_1}$ of cardinality $r$. Then $$\max \{|\Delta(S)| : S \in \mathcal{F}_r\} = \displaystyle{\sum_{i=1}^m a_{r, i} \prod_{j=i+1}^m d_j} +s - r,$$
where $(a_{r, 1}, \dots, a_{r, m})$ is the $r$-th element of $F_{u_2}^{u_1}$ in descending lexicographic order. In particular, 
\begin{enumerate}
\item[(a)] $a_r (u_1, u_2, \A) \le \displaystyle{\sum_{i=1}^m a_{r, i} \prod_{j=i+1}^m d_j} +s - r $ and 
\item[(b)] $M_r(u_1, u_2) \ge d_1 \cdots d_m  - \displaystyle{\sum_{i=1}^m a_{r, i} \prod_{j=i+1}^m d_j} -s + r.$ 
\end{enumerate}
\end{corollary}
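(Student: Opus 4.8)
The plan is to obtain the corollary as a direct assembly of Theorem~\ref{w} and Proposition~\ref{card}, using nothing more than the fact that $\nabla(S)$ and $\Delta(S)$ partition $F$. First I would record the elementary identity: since $\Delta(S) = F \setminus \nabla(S)$ and $|F| = d_1\cdots d_m$, every $S \subseteq F$ satisfies $|\Delta(S)| = d_1\cdots d_m - |\nabla(S)|$. Theorem~\ref{w} identifies the maximizer of $|\Delta(S)|$ over $\mathcal{F}_r$ as $N(r)$, so it remains only to evaluate $|\Delta(N(r))| = d_1\cdots d_m - |\nabla(N(r))|$.

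Next I would substitute the explicit count of $|\nabla(N(r))|$ supplied by Proposition~\ref{card}. Writing $N(r) = \{\mathbf{a}_1, \dots, \mathbf{a}_r\}$ with $\mathbf{a}_r = (a_{r,1}, \dots, a_{r,m})$ the $r$-th element of $F_{u_2}^{u_1}$ in descending lexicographic order, and letting $s$ denote the rank of $\mathbf{a}_r$ inside $F_{\le u_1}$ (as in Proposition~\ref{card}), that proposition gives $|\nabla(N(r))| = d_1\cdots d_m - \sum_{i=1}^m a_{r,i}\prod_{j=i+1}^m d_j - s + r$. Subtracting from $d_1\cdots d_m$ cancels the leading product and leaves $|\Delta(N(r))| = \sum_{i=1}^m a_{r,i}\prod_{j=i+1}^m d_j + s - r$, which is precisely the asserted value of $\max\{|\Delta(S)| : S \in \mathcal{F}_r\}$.

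The two ``in particular'' statements then require no new work. For (a), inequality \eqref{eqfp} reads $a_r(u_1, u_2, \A) \le \max\{|\Delta(S)| : S \subset F_{u_2}^{u_1}, |S| = r\}$, and since $\mathcal{F}_r$ is exactly this family, feeding in the formula just proved yields the bound. For (b), I would combine (a) with the identity \eqref{M}, namely $M_r(u_1, u_2) = n - a_r(u_1, u_2, \A)$ with $n = d_1\cdots d_m$; the upper bound on $a_r$ converts directly into the stated lower bound on $M_r$.

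Since every ingredient has already been proved, I do not expect any genuine obstacle: the corollary is essentially a bookkeeping step, with the real combinatorial content residing in Theorem~\ref{w} and Proposition~\ref{card}. The only places deserving attention are consistency in the meaning of the parameter $s$ between Proposition~\ref{card} and the corollary statement, and getting the signs right when passing from $|\nabla|$ to $|\Delta| = |F| - |\nabla|$ and from $a_r$ to $M_r = n - a_r$.
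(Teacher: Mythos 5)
Your proposal is correct and follows essentially the same route as the paper: the paper's proof likewise combines Theorem~\ref{w} with Proposition~\ref{card} for the main formula, deduces (a) from the footprint inequality (the paper cites \eqref{eqmax}, of which \eqref{eqfp} is the combinatorial restatement you use), and obtains (b) from \eqref{M}. Your write-up merely makes explicit the bookkeeping, namely $|\Delta(S)| = d_1\cdots d_m - |\nabla(S)|$ and the sign cancellations, which the paper leaves implicit.
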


\begin{proof}
The first assertion follows from Theorem \ref{w} and Proposition \ref{card}. The assertion (a) follows from equation \eqref{eqmax} and we now derive (b) as a consequence of equation \eqref{M}.
\end{proof}

In the following and the last section of this article, we will produce a set $\{f_1, \dots, f_r \} \in \mathcal{C}_r$ for which the upper bound for $a_r (u_1, u_2, \A)$ given in the Corollary \ref{combfinal} is attained.

\section{Maximal family of polynomials and the relative generalized Hamming weights of affine Cartesian codes}

As mentioned before, we now construct a family of polynomials $\{f_1,\dots,f_r\} \in \mathcal{C}_r$ such that $|\Z_\A(f_1,\dots,f_r)|$ attains the upper bound for $a_r (u_1, u_2, \A)$ as obtained in Corollary \ref{combfinal}. We call such a family of polynomials as a maximal family. First, recall that,  $A_i=\{\gamma_{i,1},\dots,\gamma_{i,d_i}\}$ for $i=1, \dots, m$.

\begin{definition}\label{def:fa}
For $\textbf{b} = (b_1, \dots, b_m) \in F$ define the polynomial,
$$f_{\textbf{b}} = \prod_{i=1}^m \prod_{j=1}^{b_i}(x_i - \gamma_{i, j}).$$
\end{definition}
We may note that $\deg f_{\textbf{b}} = b_1 + \dots + b_m$ and with respect to the graded lexicographic order the leading term of$f_{\textbf{b}}$ is given by $\LT (f_{\textbf{b}}) = x_1^{b_1} \cdots x_m^{b_m}$. We further observe that,
We define a map $\psi : \mathcal{A} \to F$ given by $(\gamma_{1, i_1}, \dots, \gamma_{m, i_m}) \mapsto (i_1 - 1, \dots, i_m - 1)$. The map $\psi$ is a bijection. It follows easily that for $\gamma \in \A$,
\begin{equation}\label{nv}
f_{\textbf{b}} (\gamma) \neq 0 \iff  \psi(\gamma)  \in \nabla(\textbf{b})
\end{equation}
We have the following proposition which is an analogue of \cite[Proposition 4.5]{BD}.
\begin{proposition}\label{maximal}
Let $\textbf{a}_1, \dots, \textbf{a}_r$ be the first $r$ elements of $F_{u_2}^{u_1}$ in descending lexicographic order and suppose that $\textbf{a}_r$ is the $s$-th element of $F_{\le u_1}$ in descending lexicographic order. 
Then, $$|{\rm Supp} (f_{\textbf{a}_1}, \dots, f_{\textbf{a}_r})| = d_1 \cdots d_m - \sum_{i=1}^m a_{r,i} \prod_{j=i+1}^m d_j - s + r,$$ where
$\textbf{a}_r = (a_{r,1}, \dots, a_{r, m})$ and ${\rm Supp} (f_{\textbf{a}_1}, \dots, f_{\textbf{a}_r}) = \mathcal{A} \setminus \Z_\A(f_{\textbf{a}_1}, \dots, f_{\textbf{a}_r})$.
\end{proposition}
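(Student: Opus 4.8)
The plan is to reduce the computation of the support size to the combinatorial quantity $|\nabla(\textbf{a}_1, \dots, \textbf{a}_r)|$ that was already evaluated in Proposition \ref{card}, using the characterization \eqref{nv} together with the bijection $\psi$. The heavy numerical work is thus entirely offloaded onto Lemma \ref{fp} and Proposition \ref{card}; what remains is a clean set-theoretic translation.

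First I would observe that a point $\gamma \in \mathcal{A}$ lies in $\supp(f_{\textbf{a}_1}, \dots, f_{\textbf{a}_r}) = \mathcal{A} \setminus \Z_{\A}(f_{\textbf{a}_1}, \dots, f_{\textbf{a}_r})$ precisely when $f_{\textbf{a}_i}(\gamma) \neq 0$ for at least one index $i$. By \eqref{nv}, this occurs if and only if $\psi(\gamma) \in \nabla(\textbf{a}_i)$ for some $i$, that is, if and only if $\psi(\gamma) \in \bigcup_{i=1}^r \nabla(\textbf{a}_i) = \nabla(\textbf{a}_1, \dots, \textbf{a}_r)$, where the last equality is immediate from the definition of the shadow of a set. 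Since $\psi$ is a bijection, it therefore restricts to a bijection between $\supp(f_{\textbf{a}_1}, \dots, f_{\textbf{a}_r})$ and $\nabla(\textbf{a}_1, \dots, \textbf{a}_r)$, whence $|\supp(f_{\textbf{a}_1}, \dots, f_{\textbf{a}_r})| = |\nabla(\textbf{a}_1, \dots, \textbf{a}_r)|$.

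Next I would note that the hypotheses of the present statement match those of Proposition \ref{card} verbatim: the $\textbf{a}_i$ are the first $r$ elements of $F_{u_2}^{u_1}$ in descending lexicographic order, and $\textbf{a}_r$ is the $s$-th element of $F_{\le u_1}$ in descending lexicographic order. Applying Proposition \ref{card} directly yields $|\nabla(\textbf{a}_1, \dots, \textbf{a}_r)| = d_1 \cdots d_m - \sum_{i=1}^m a_{r,i}\prod_{j=i+1}^m d_j - s + r$, which is exactly the claimed value of $|\supp(f_{\textbf{a}_1}, \dots, f_{\textbf{a}_r})|$, and the proof concludes.

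The only genuine point requiring care is the direction of the set-theoretic correspondence. The common-zero locus $\Z_{\A}(f_{\textbf{a}_1}, \dots, f_{\textbf{a}_r})$ corresponds under $\psi$ to the footprint $\Delta(\textbf{a}_1, \dots, \textbf{a}_r)$, i.e. to the intersection of the complements of the individual shadows; consequently its complement, the support, corresponds to the \emph{union} of the individual shadows, namely $\nabla(\textbf{a}_1, \dots, \textbf{a}_r)$, rather than to any footprint. Once this bookkeeping is pinned down, no further computation is needed, since Proposition \ref{card} supplies the closed form.
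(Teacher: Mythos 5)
Your proof is correct and takes essentially the same route as the paper: use \eqref{nv} and the bijection $\psi$ to identify $\mathrm{Supp}(f_{\textbf{a}_1}, \dots, f_{\textbf{a}_r})$ with the shadow $\nabla(\textbf{a}_1, \dots, \textbf{a}_r)$, then invoke Proposition \ref{card}. Your bookkeeping on the direction of the correspondence is in fact slightly more careful than the paper's own proof, which writes that $\gamma \in \mathrm{Supp}$ iff $\psi(\gamma) \in \Delta(\textbf{a}_1, \dots, \textbf{a}_r)$ (evidently a typo for $\nabla$) before drawing the correct conclusion $|\mathrm{Supp}| = |\nabla(\textbf{a}_1, \dots, \textbf{a}_r)|$.
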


\begin{proof}
It follows from equation \eqref{nv} that $\gamma \in {\rm Supp}(f_{\textbf{a}_1}, \dots, f_{\textbf{a}_r})$ if and only if $\psi (\gamma) \in \Delta (\textbf{a}_1, \dots, \textbf{a}_r)$. 
Thus, $|{\rm Supp}(f_{\textbf{a}_1}, \dots, f_{\textbf{a}_r})| = |\nabla (\textbf{a}_1, \dots, \textbf{a}_r)|$.  
Since $\textbf{a}_1, \dots, \textbf{a}_r$ are the first $r$ elements of $F_{u_1}^{u_2}$ in descending lexicographic order we see that, $$|{\rm Supp}(f_{\textbf{a}_1}, \dots, f_{\textbf{a}_r})| = |\nabla (\textbf{a}_1, \dots, \textbf{a}_r)| =  d_1 \cdots d_m  - \displaystyle{\sum_{i=1}^m a_{r, i} \prod_{j=i+1}^m d_j} - s + r,$$ where the last equality follows from Proposition \ref{card}. This completes the proof. 
\end{proof}

Finally we may state the main result of this paper where we compute all the RGHWs of an affine Cartesian code with respect to a smaller affine Cartesian code.

\begin{theorem}\label{last}
Fix integers $u_1, u_2$ with $-1 \le u_2 < u_1 \le \sum_{i=1}^m (d_i -1)$. Let $AC_q (u_1, \A)$ and $AC_q (u_2, \A)$ denote the corresponding affine Cartesian codes. For any integer $1 \le r \le \ell:= \dim AC_q (u_1, \A) - \dim AC_q(u_2, \A)$, the $r$-th RGHW of $AC_q (u_1, \A)$ with respect to $AC_q (u_2, \A)$, denoted by $M_r (u_1, u_2)$ is given by
$$M_r (u_1, u_2) =  d_1 \cdots d_m  - \displaystyle{\sum_{i=1}^m a_{r, i} \prod_{j=i+1}^m d_j} -s + r,$$
where $(a_{r,1}, \dots, a_{r,m})$ is the $r$-th element of $F_{u_2}^{u_1}$ and $s$-th element of $F_{\le u_1}$ in descending lexicographic order.
\end{theorem}

\begin{proof}
The result follows from Corollary \ref{combfinal} and Proposition \ref{maximal}.
\end{proof}

\section{Acknowledgments}
The author expresses his gratitude to Olav Geil for pointing out this problem,  Peter Beelen for some enlightening discussions on these topics, and Trygve Johnsen for his careful reading of the manuscript and several comments.

\end{document}